\renewcommand{\paragraph}{%
	\@startsection{paragraph}{4}%
	{\z@}{1.25ex \@plus 1ex \@minus .2ex}{-1em}%
	{\normalfont\normalsize\bfseries}%
}
\declaretheorem[numberwithin=section,refname={Theorem,Theorems},Refname={Theorem,Theorems}]{theorem}
\declaretheorem[numberlike=theorem]{lemma}
\declaretheorem[numberlike=theorem]{corollary}
\declaretheorem[numberlike=theorem]{definition}
\declaretheorem[numberlike=theorem]{task}
\newcommand{\R}{\mathbb{R}}
\newcommand{\N}{\mathbb{N}}
\newcommand{\F}{\mathbb{F}}
\renewcommand{\tilde}{\widetilde}
\newcommand{\zerovec}{\vec{0}} 
\newcommand{\zeromat}{\boldsymbol{0}} 
\newcommand{\unitvec}{\vec{e}}
\xdef\csname m\x\endcsname{\noexpand\mathbf{\x}}
\newcommand{\of}{\noexpand{\overline{f}}}
\newcommand{\oc}{\noexpand{\overline{c}}}
\newcommand{\ob}{\noexpand{\overline{b}}}
\newcommand{\os}{\noexpand{\overline{s}}}
\newcommand{\ot}{\noexpand{\overline{t}}}
\newcommand{\og}{\noexpand{\overline{g}}}
\newcommand{\ou}{\noexpand{\overline{u}}}
\newcommand{\ov}{\noexpand{\overline{v}}}
\newcommand{\ow}{\noexpand{\overline{w}}}
\newcommand{\ox}{\noexpand{\overline{x}}}
\newcommand{\oy}{\noexpand{\overline{y}}}
\newcommand{\oz}{\noexpand{\overline{z}}}
\xdef\csname c\x\endcsname{\noexpand\mathcal{\x}}
\xdef\csname om\x\endcsname{\noexpand\mathbf{\overline{\x}}}
\xdef\csname tm\x\endcsname{\noexpand\mathbf{\widetilde{\x}}}
\title{Unifying Matrix Data Structures: \\
Simplifying and Speeding up Iterative Algorithms}
\author{
Jan van den Brand\thanks{\texttt{janvdb@kth.se}. KTH Royal Institute of Technology, Sweden.}
}
\begin{document}
	
\begin{titlepage}
	\maketitle
	
	\pagenumbering{roman}
Many algorithms use data structures that maintain properties of matrices undergoing some changes.
The applications are wide-ranging and include for example matchings, shortest paths, linear programming, semi-definite programming, convex hull and volume computation.
Given the wide range of applications, the exact property these data structures must maintain varies from one application to another, forcing algorithm designers to invent them from scratch or modify existing ones.
Thus it is not surprising that these data structures and their proofs are usually tailor-made for their specific application and that maintaining more complicated properties results in more complicated proofs.

In this paper we present a unifying framework that captures a wide range of these data structures. The simplicity of this framework allows us to give short proofs for many existing data structures regardless of how complicated the to be maintained property is. 
We also show how the framework can be used to speed up existing iterative algorithms, such as the simplex algorithm.

More formally, consider any rational function $f(A_1,...,A_d)$ with input matrices $A_1,...,A_d$. We show that the task of maintaining $f(A_1,...,A_d)$ under updates to $A_1,...,A_d$ can be reduced to the much simpler problem of maintaining some matrix inverse $M^{-1}$ under updates to $M$.
The latter is a well studied problem called dynamic matrix inverse.
By applying our reduction and using known algorithms for dynamic matrix inverse we can obtain fast data structures and iterative algorithms for much more general problems.

	\newpage
	\setcounter{tocdepth}{2}
	\tableofcontents
	
\end{titlepage}

\newpage
\pagenumbering{arabic}

\section{Introduction}
\label{sec:intro}

There exist many algorithms and data structures 
that maintain some properties of a matrix 
(or expression involving matrices) 
while the input matrices change over time.
There are many different applications and areas where such matrix data structures
are used.
Applications include graph theoretic problems like
reachability \cite{Sankowski04,BrandS19},
shortest paths \cite{Sankowski05,BrandN19},
matchings \cite{Sankowski07,MuchaS06},
but also in convex optimization, e.g.
linear programming \cite{Karmarkar84,Vaidya89a,LeeS15,cls19,lsz19,b20,JiangSWZ20,blss20}, 
cutting plane algorithms \cite{lsw15,JiangLSW20},
and $\ell_p$-norm regression \cite{adil2019iterative}.
Other areas include computational geometry
(e.g.~convex hull and volume computation \cite{FisikopoulosP16}),
statistics (e.g.~online linear least squares \cite{Gentleman74,Plackett50}), 
robotics \cite{Barfoot05}, and machine learning \cite{GaloogahiFL17}.

Each of the previously mentioned applications usually needs to maintain different properties.
For example for graph matchings one must maintain the rank of some matrix \cite{Sankowski07},
for online linear regression one needs to maintain the Penrose-pseudo inverse \cite{Gentleman74,Plackett50},
and interior point based linear program solvers require to maintain some projection matrix \cite{cls19,lsz19,b20,JiangSWZ20}.
For each application, the authors create tailor-made data structures 
that maintain the solution for their specific problem.
This results in a wide range of different data structures that all serve a different purpose.
Unfortunately, only few of the previously listed results have short and simple proofs.
As the to be maintained property grows more complicated,
the correctness and complexity analysis of these papers increases as well.
And even if some proofs are simple, the sheer amount of different data structures means that there exists many different proofs in the area.

We are able to simplify the more complicated data structures
and unify the area of matrix data structures 
by reducing a wide range of data structure problems to the so called \emph{dynamic matrix inverse} problem.
This data structure problem is about maintaining some inverse $\mM^{-1}$ while the underlying matrix $\mM$ changes over time.
Informally, our result can be stated as follows.

\begin{theorem}[Informal statement of \Cref{thm:reduction}]
\label{thm:informal_reduction}
Let $f$ be any formula with input matrices $\mA_1,...,\mA_d$ (potentially with differing dimensions),
where the formula $f$ consists of only matrix-addition, -subtraction, -multiplication, and -inversion.

Then the value of $f(\mA_1,...,\mA_d)$ can be maintained under updates to $\mA_1,...,\mA_d$ via a dynamic matrix inverse algorithm.
\end{theorem}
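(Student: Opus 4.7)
The plan is to establish, by induction on the parse tree of $f$, a uniform representation of every sub-expression in the form
\[
f(\mA_1,\ldots,\mA_d) \;=\; \mU^\top \mM^{-1} \mV,
\]
in which $\mM$ is a block matrix whose entries are either constants from $\{0, \pm\mI\}$ or (signed) copies of the input matrices $\mA_i$, while $\mU$ and $\mV$ are fixed selection/projection matrices that do not depend on the inputs. Once such a representation is built for the whole formula $f$, every update to some $\mA_i$ becomes an update to the corresponding block of $\mM$, maintaining $\mM^{-1}$ under these updates is exactly the dynamic matrix inverse problem, and the desired value is recovered by reading off $\mU^\top \mM^{-1} \mV$.

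The base case places $\mA_i$ itself in the claimed form via the identity
\[
\begin{pmatrix} 0 & \mI \\ -\mI & \mA_i \end{pmatrix}^{-1}
 \;=\; \begin{pmatrix} \mA_i & -\mI \\ \mI & 0 \end{pmatrix},
\]
which works even when $\mA_i$ is rectangular. The three inductive steps follow from standard block identities. For $f = f_1 + f_2$ the matrix $\mM = \mathrm{diag}(\mM_1,\mM_2)$ with stacked $\mU,\mV$ works, and subtraction is addition with a sign flip. For $f = f_1 f_2$ I will use
\[
\begin{pmatrix} \mM_1 & -\mV_1 \mU_2^\top \\ 0 & \mM_2 \end{pmatrix}^{-1}
 \;=\; \begin{pmatrix} \mM_1^{-1} & \mM_1^{-1}\mV_1 \mU_2^\top \mM_2^{-1} \\ 0 & \mM_2^{-1} \end{pmatrix},
\]
so that conjugating the off-diagonal block by $\mU_1$ on the left and $\mV_2$ on the right returns $f_1 f_2$ exactly. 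For $f = g^{-1}$ with $g = \mU^\top \mM^{-1} \mV$, the Schur complement of the top-left block of $\bigl(\begin{smallmatrix} \mM & \mV \\ -\mU^\top & 0 \end{smallmatrix}\bigr)$ equals $g$, so $g^{-1}$ appears as the bottom-right block of its inverse. Composing these four recipes along the formula $f$ produces one block matrix $\tmM$, of size polynomial in the formula, together with explicit fixed $\mU,\mV$.

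The main obstacle I anticipate is not the algebra---each of the identities above is elementary---but two bookkeeping issues. First, the inversion step is only valid when the sub-expression being inverted is itself non-singular; I will need to argue that the composite matrix $\tmM$ inherits invertibility from the invertibility of all such intermediate expressions along the entire update sequence, possibly via a generic perturbation argument or by restricting to sequences that keep every relevant Schur complement non-singular. Second, I must verify that an entry-wise or low-rank change to $\mA_i$ translates into a change to $\tmM$ of the same type, so that the dynamic matrix inverse black box actually applies; this requires keeping track of which blocks of $\tmM$ each occurrence of $\mA_i$ lives in during the inductive construction, and synchronising updates when the formula uses $\mA_i$ in several places. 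Resolving these two points cleanly is where I expect the real work of the proof to lie.
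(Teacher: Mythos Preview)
Your proposal is correct and follows essentially the same inductive route as the paper: build a block-matrix gadget for each node type (leaf, $+$, $\cdot$, $(\cdot)^{-1}$) so that the formula value sits inside a single inverse, and your multiplication and inversion gadgets match the paper's almost verbatim (both use the Schur-complement identity for $(\cdot)^{-1}$ and the upper-triangular block for products). Two remarks: the paper keeps the slightly stricter invariant that $f$ is literally a sub-block $(\mN^{-1})_{I,J}$ rather than a general $\mU^\top\mM^{-1}\mV$, which is why its addition gadget is a $4\times4$ block instead of your block-diagonal one---this buys that entry/row/column queries on $f$ translate \emph{directly} to the same on $\mN^{-1}$ without any post-multiplication; and your two anticipated obstacles dissolve, since invertibility of every intermediate Schur complement is exactly the paper's ``executable'' hypothesis, and because a formula is a tree each input symbol is a distinct leaf, so no synchronisation across multiple occurrences is ever needed.
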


\Cref{thm:informal_reduction} allows for the following simplifications:
\begin{enumerate}[label=\Roman*,nosep]
\item We can simplify complicated data structures: 
For long and complex formulas $f$, one usually needs to construct rather complicated data structures. 
However, thanks to \Cref{thm:informal_reduction} it now suffices to only maintain the inverse of some matrix, 
i.e.~the simple formula $f(\mM) = \mM^{-1}$. \label{point:reduction}
\item We can unify the area of matrix data structures: 
Instead of constructing and analysing many different data structures 
it suffices to consider only dynamic matrix inverse data structures. 
This reduces the overall amount of proofs in the area. \label{point:proofs}
\item We can construct simple algorithms, if one uses existing dynamic matrix inverse algorithms as a blackbox: 
Existing data structure for dynamic matrix inverse are very efficient. 
Simple iterative algorithms can become quite fast when using these existing data structures.
So fast in fact, that the simple algorithms often match the complexity of more complicated algorithms. \label{point:iterative}
\end{enumerate}
We give some examples further below to demonstrate these claims.
These examples are more formally discussed and proven in \Cref{sec:applications},
where we also discuss some implications of \Cref{thm:informal_reduction} regarding fine-grained complexity theory.

In addition to these simplifications, \Cref{thm:informal_reduction} might also be interesting for future research. 
If one comes up with a new iterative algorithm for some problem, 
then \Cref{thm:informal_reduction} is a powerful tool to obtain a data structure for this new iterative algorithm 
by simply reducing to existing data structures.

\paragraph{Interior Point Based Linear Program Solvers}
Consider a linear program of the form
$$
\min_{\mA x=b, x\ge0} c^\top x
$$
for some $\mA \in \R^{d\times n}$, $b\in\R^d$ and $c\in\R^n$.
Each of \cite{cls19,lsz19,b20,JiangSWZ20} developed their own data structure that maintained one of the following terms
\begin{align}
\text{\cite{cls19}}:&~\phantom{\mR} \mD\mA^\top(\mA\mD^2\mA^\top)^{-1}\mA\mD \label{eq:intro:cls}\\
\text{\cite{lsz19}}:&~ \mR\mD\mA^\top(\mA\mD^2\mA^\top)^{-1}\mA\mD \\
\text{\cite{b20}}:&~\phantom{\mR} \mD\mA^\top(\mA\mD^2\mA^\top)^{-1}\mA\mD h \label{eq:intro:b20}\\
\text{\cite{JiangSWZ20}}:&~ \mR\mD\mA^\top(\mA\mD^2\mA^\top)^{-1}\mA\mD h \label{eq:intro:jswz}
\end{align}
where $\mA$ is the constraint matrix of the linear program,
$\mR$ is a random sketch-matrix, 
$\mD$ is a diagonal matrix that changes over time,
and $h$ is a vector that changes over time as well.
Since the terms differ, each of \cite{cls19,lsz19,b20,JiangSWZ20} had to construct a new data structure.
However, thanks to \Cref{thm:informal_reduction} we now know that it suffices 
to only consider maintaining $\mM^{-1}$ for some matrix $\mM$ that changes over time. 
More accurately, we can unify the data structures for maintaining \eqref{eq:intro:cls}-\eqref{eq:intro:b20}
via the reduction of \Cref{thm:informal_reduction} and a single dynamic matrix inverse algorithm.
This demonstrates Point \ref{point:proofs}.

Further, this also allows to simplify the existing analysis: 
With each increment in length of the above terms \eqref{eq:intro:cls}-\eqref{eq:intro:jswz}, 
the respective data structures developed in \cite{cls19,lsz19,b20,JiangSWZ20} grow more complicated as well.
The most recent result \cite{JiangSWZ20} by Jiang et al.~dedicates the majority of the paper to their new data structure (90 pages). 
By exploiting \Cref{thm:informal_reduction} we can replace the majority (50 pages) of the data structure analysis in \cite{JiangSWZ20} 
by reducing to existing dynamic matrix inverse data structures.
This demonstrates Point \ref{point:reduction}.

\paragraph{Pivoting Based Algorithms}

We show that the time per iteration of pivoting based algorithms
can be reduced from $O(n^2)$ to $O(n^{1.529})$ by using \Cref{thm:informal_reduction} and a data structure by Sankowski \cite{Sankowski04}.

This improves the complexity of the simplex algorithm for solving linear programs.
While theoretic bounds on the complexity of the simplex method are super-polynomial
because of the large number of pivoting steps \cite{KleeM72,Kalai92,HansenZ15},
the simplex algorithm is observed to run in $O(n)$ iterations on practical problem instances \cite{Shamir1987}
or $O(n^3)$ when adding small random noise to the input instance 
\cite{SpielmanT04,DeshpandeS05,KelnerS06,Vershynin09,DadushH18}.
Thus a polynomial speed-up of the time per iteration is non-negligible for those types of instances.
Indeed, analyzing the time per iteration has received attention in the past,
and different trade-offs between time, space, and numeric precision have been obtained
\cite{Bartels68,BartelsG69,Reid82,Goldfarb77a}.
However, none of these results managed to achieve $O(n^{2-\epsilon})$ time per iteration for constant $\epsilon>0$.
Thus our $O(n^{1.529})$ result demonstrates how our technique can be use to speed-up iterative algorithms.

Besides of the simplex algorithm, there also exist other pivoting based algorithms,
some of which have better worst-case bounds on the number of iterations.
For example, there exists a simple $O(n^3)$ time pivoting-based algorithm (with $O(n)$ iterations) for constructing basic solutions.
Beling and Megiddo \cite{BelingM98} present an alternative algorithm that runs in $O(n^{2.529})$ time,
though their new algorithm is also more complicated.
By using \Cref{thm:informal_reduction} we speed up the simple $O(n^3)$ time algorithm to run in $O(n^{2.529})$ time instead. 
This demonstrates Point \ref{point:iterative} because now the simple algorithm runs with the same complexity as the more complicated algorithm from \cite{BelingM98}.

\paragraph{Reproducing and simplifying further results}

To further demonstrate how \Cref{thm:informal_reduction} can be used to speed up simple iterative algorithms,
we show how the QR decomposition of a matrix can be computed in $O(n^\omega)$ time\footnote{%
Here $\omega$ is the matrix exponent, i.e.~multiplying two $n\times n$ matrices can be done in $O(n^\omega)$ operations.}
by speeding up the Gram-Schmidt procedure.
It was already known that QR decompositions can be computed in $O(n^\omega)$ time \cite{ElmrothG00,DemmelDH07},
the point here is that the simple Gram-Schmidt procedure can be sped up from $O(n^3)$ to $O(n^\omega)$ time.

We also show how to reproduce the result on ``online linear systems'' by Storjohann and Yang \cite{StorjohannY15}
via \Cref{thm:informal_reduction} and existing dynamic matrix inverse data structures.
This again shows that many different results can be reproduced by reducing to dynamic matrix inverse,
reducing the overall amount of proofs required in the area.

\section{Preliminaries}

\paragraph{Misc}
For $n \in \N$ we write $[n]$ for the set $\{1,...,n\}$.
For sets $I \subset [n]$, $J \subset [m]$ and an $n\times m$ matrix $\mM$
we write $\mM_{I,J}$ for the submatrix consisting of rows with index in $I$ and columns with index in $J$.
We write $\mI^{(n)}$ for the $n\times n$ identity matrix and $\zeromat^{(n,m)}$ for an $n\times m$ all zero matrix,
though for simplicity we may also just write $\mI$ if the dimension of the identity matrix is clear from context.
Note that selecting some submatrix $\mM_{I,J}$ of an $n\times m$ matrix can be done via the product
$\mI^{(n)}_{I,[n]} \mM \mI^{(m)}_{[m],J} = \mM_{I,J}$.
We use $\tilde{O}(\cdot)$ to hide polylog$(n)$ factors.
We write $\omega(\cdot,\cdot,\cdot)$ for the matrix exponent, i.e. multiplying an $n^a \times n^b$ by $n^b \times n^c$ matrix requires $O(n^{\omega(a,b,c)})$ operations.
For simplicity we also write $\omega$ as shorthand for $\omega(1,1,1)$.

\paragraph{Matrix formulas}

The terminology of formulas is already known 
in the context of boolean circuits/formulas
which represent boolean functions \cite{Papadimitriou94},
or the context of arithmetic circuits/formulas
which represent functions over groups/rings/fields \cite{BurgisserCS97}.
Note that the set of matrices does not form a group,
because we can only add matrices if the dimensions match,
so we need to adapt the definition of arithmetic formulas a bit
to reflect this requirement.

\begin{definition}
We call a directed tree $f$ (with edges directed towards the root)
a ``$p$-input matrix formula over field $\F$'',
if the nodes of the tree are labeled such that
\begin{itemize}[nosep]
\item Each node is either a \emph{constant-node}, \emph{input-node}, or \emph{gate-node}.

\item Every constant-node $v$ has in-degree zero (i.e. it is a leaf),
is labeled by numbers $n_v, m_v$
and a matrix from $\F^{n_v \times m_v}$.

\item There are $p$ input-nodes.
Each input node $v$ has in-degree zero (i.e. it is a leaf),
is labeled by numbers $n_v, m_v$,
and the symbol $\mM_v$ representing an $n_v \times m_v$ matrix.

\item Every gate-node $v$ is labeled by 
an arithmetic operation $+$ (addition), $\cdot$ (multiplication), or $(\cdot)^{-1}$ (inversion),
and two numbers $n_v$ and $m_v$.

Gate-nodes $w$ labeled with $+$ or $\cdot$ have in-degree $2$. 
Let $w_L,w_R \in V$ be the left and right child of $w$
If the label of $w$ is $+$ then we have $n_v = n_{v_L} = n_{v_R}$ and $m_v = m_{v_L} = m_{v_R}$,
while for label $\cdot$ we have $n_v = n_{v_L}$, $m_v = m_{v_R}$, $m_{v_L} = n_{v_R}$.

Gate-nodes $w$ labeled with $(\cdot)^{-1}$ have in-degree $1$.
Let $w'$ be the child of $w$,
then $n_w=m_w=n_{w'} = m_{w'}$.

\item There is one node with out-degree $0$ (i.e. the root), called \emph{output-node}.
\end{itemize}
We may also write $\mM_1,...,\mM_p$ for the $p$ symbolic matrices of the input-gates.
\end{definition}

Let $f$ be a $p$-input matrix formula,
where each input $M_i$ is of dimension $n_i \times m_i$
and let $v$ be the output-node of the formula.
Then the formula $f$ represents some function $g: {{\times}}_{i=1}^d \F^{n_i \times m_i} \to \F^{n_v\times m_v}$
in the natural way.
For notational simplicity we also write $f$ for the function represented by the formula. 

For matrices $\mA_1,...,\mA_p$ we call $f(\mA_1,...,\mA_p)$ \emph{executable},
if the formula does not try to invert a non-invertible matrix.
Note that \emph{executable} is a property that depends on both $f$ and the input matrices $\mA_1,...,\mA_p$,
i.e.~there exists formulas that are never executable (e.g.~$f = \zeromat^{-1}$)
and there exist formulas that are only executable for some inputs (e.g.~$f(\mM) = \mM^{-1}$).

\paragraph{Block matrices}

An $n \times m$ block matrix $\mB$ is an $n \times m$ matrix 
with an explicit (or implicit via context) partition of it's rows and columns, 
i.e.~$\dot{\bigcup}_{i=1}^s I_i = [n]$, $\dot{\bigcup}_{i=1}^t J_i = [m]$.
Each submatrix $\mB_{I_i,J_j}$ for $1\le i\le s$, $1\le j\le t$ obtained by this partition is called a block of $\mB$.
For example given some $n \times n$ matrix $\mA$, 
the following is a $2n \times 2n$ block matrix with blocks $\mA$, $\mI^{(n)}$, $\mI^{(n)}$, and $\zeromat^{(n,n)}$:
\begin{align*}
\begin{bmatrix}
\mA & \mI^{(n)} \\
\mI^{(n)} & \zeromat^{(n,n)}
\end{bmatrix}.
\end{align*}

\paragraph{Symbolic block matrices}

We define the term ``symbolic block matrix''.
A symbolic block matrix is just a block matrix where an entire block is allowed to be an abstract symbol (i.e.~a variable).
For example the following matrix $\mC$ is a symbolic block matrix
\begin{align*}
\mC = \left[\begin{array}{cccc}
1 & 2 & 3 & 4 \\
5 & 6 & 7 & 8 \\
\cline{1-2}
\multicolumn{2}{|c|}{\multirow{2}{*}{$\mM$}} & 9 & 0\\
\multicolumn{2}{|c|}{} & 1 & 2 \\
\cline{1-2}
\end{array}\right]
\end{align*}
Here $\mM$ is a symbol
and we also identify $\mC$ as the function $\mC: \R^{2 \times 2} \to \R^{4 \times 4}$,
that substitutes the symbol $\mM$ by some given $2 \times 2$ input matrix. 
So for example
\begin{align*}
\mC\left(\left[\begin{array}{cc}
1 & 2 \\
3 & 4
\end{array}\right]\right) = 
\left[\begin{array}{cccc}
1 & 2 & 3 & 4 \\
5 & 6 & 7 & 8 \\
\cline{1-2}
\multicolumn{1}{|c}{1}&\multicolumn{1}{c|}{2} & 9 & 0\\
\multicolumn{1}{|c}{3}&\multicolumn{1}{c|}{4} & 1 & 2 \\
\cline{1-2}
\end{array}\right]
\end{align*}
Formally we define a symbolic block matrix as follows.

\begin{definition}
A symbolic block matrix over $\F$ is a block matrix where each block is either a matrix over $\F$ or a unique symbol, 
i.e.~the same symbol is not allowed in more than one block.

Let $\mC$ be a symbolic block matrix of size $n \times m$ 
with symbols $\mM_1,...,\mM_d$,
where each $\mM_i$ is in a block of size $n_i \times m_i$.
Then we identify $\mC$ also with the function $\mC : \times_{i=1}^d \F^{n_i \times m_i} \to \F^{n \times m}$.
That is, for matrices $\mA_i \in \F^{n_i \times m_i}$ 
the matrix $\mC(\mA_1,...,\mA_d)$ is obtained by substituting each $\mM_i$ by $\mA_i$.
\end{definition}

\section{Reducing Formulas to Matrix Inverse}
\label{sec:matrix}

In this section we prove the main tool for obtaining \Cref{thm:informal_reduction}
(and its formal variant \Cref{thm:reduction}).
We want to show that for any $p$-input matrix formula $f$, 
the task of maintaining $f(\mM_1,...,\mM_p)$ under updates to $\mM_i$, 
reduces to the problem of maintaining a submatrix of some inverse matrix $\mN(\mM_1,...,\mM_p)^{-1}$
for some symbolic block matrix $\mN$.
The following theorem shows that such a symbolic block matrix $\mN$ can be constructed for any formula $f$.

\begin{theorem}
\label{thm:main}
Given a $p$-input matrix formula $f$ over field $\F$,
define $n := \sum_{v\in V} n_v + m_v$.

Then there exists a symbolic block matrix $\mN$ of size 
at most $n\times n$,
and sets $I,J \subset [n]$, 
such that for all matrices $\mA_1,...,\mA_p$
for which $f(\mA_1,...,\mA_d)$ is executable,
$(\mN(\mA_1,...,\mA_d)^{-1})_{I,J} = f(\mA_1,...,\mA_d)$.

Constructing $\mN$ from $f$ can be done in $O(n^2)$ time.
\end{theorem}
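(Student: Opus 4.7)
The plan is to induct on the structure of $f$, maintaining a stronger inductive invariant than the statement asks. Concretely, for every subformula $g$ at a node $v$ with output dimensions $n_g \times m_g$, I will build a symbolic square matrix $\mN_g$ (whose symbols are precisely the input leaves in the subtree below $v$, each occupying a unique block) together with two constant $0/1$ matrices $U_g \in \F^{n_g \times s_g}$ and $V_g \in \F^{s_g \times m_g}$, where $s_g$ is the size of $\mN_g$, such that
\[
U_g \, \mN_g(\mA_1,\ldots,\mA_p)^{-1} \, V_g \;=\; g(\mA_1,\ldots,\mA_p)
\]
whenever $g$ is executable. This sandwich invariant composes cleanly under all four formula operations, whereas the submatrix-of-inverse form in the statement does not; I convert back to submatrix form by one final padding at the root.

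The four inductive rules are each one block-matrix identity. For an input leaf $\mM$ of size $n\times m$, set $\mN_g = \bigl(\begin{smallmatrix} I_n & -\mM \\ 0 & I_m \end{smallmatrix}\bigr)$, with $U_g$ selecting the top $n$ rows and $V_g$ the bottom $m$ columns. For $g = g_L + g_R$, place $\mN_{g_L}, \mN_{g_R}$ block-diagonally and concatenate the selectors; the block-diagonal inverse then forces the two sandwiches to sum. For $g = g_L \cdot g_R$, use the block upper-triangular $\mN_g = \bigl(\begin{smallmatrix} \mN_{g_L} & -V_{g_L} U_{g_R} \\ 0 & \mN_{g_R} \end{smallmatrix}\bigr)$; a one-line computation shows the top-right block of its inverse is $\mN_{g_L}^{-1} V_{g_L} U_{g_R} \mN_{g_R}^{-1}$, which sandwiches to $g_L g_R$ against the natural selectors $U_g = (U_{g_L},\,0)$ and $V_g = \bigl(\begin{smallmatrix}0\\V_{g_R}\end{smallmatrix}\bigr)$. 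For $g = g_L^{-1}$, take $\mN_g = \bigl(\begin{smallmatrix} \mN_{g_L} & V_{g_L} \\ -U_{g_L} & 0 \end{smallmatrix}\bigr)$; the Schur complement of its top-left block is exactly $U_{g_L} \mN_{g_L}^{-1} V_{g_L} = g_L$, so the $(2,2)$ block of the inverse is $g_L^{-1}$. In every case, invertibility of $\mN_g$ under the executability hypothesis is immediate from the block-triangular or Schur-complement structure.

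At the root $r$, convert the sandwich to a submatrix by the padding
\[
\mN \;=\; \begin{pmatrix} I_{n_r} & -U_r & 0 \\ 0 & \mN_r & -V_r \\ 0 & 0 & I_{m_r} \end{pmatrix},
\]
which is block upper triangular with invertible diagonal blocks; a direct calculation shows the top-right block of $\mN^{-1}$ is $U_r \mN_r^{-1} V_r = f$, so $I := [n_r]$ and $J$ is the last $m_r$ column indices. A short induction gives $s_g \le \sum_{w \in \mathrm{subtree}(g)} (n_w + m_w)$ with strict slack at least $n_g + m_g$ at every add/mul node and at least $n_{g_L}$ at every inversion node; this slack exactly covers the $n_r + m_r$ overhead of the padding (and for leaf or inversion roots the padding can be skipped entirely, since $\mN_r$ already presents $f$ as a submatrix of its inverse), keeping the total size $\le n$. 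Writing down the resulting $\le n \times n$ block matrix, whose entries are each $0$, $\pm I$, a sparse $0/1$ selector product, or an input symbol $-\mM_i$, takes $O(n^2)$ time including zero-initialization.

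The main obstacle is identifying this invariant. The theorem's submatrix-of-inverse form does not compose: the sum of two submatrices of different inverses is not itself such a submatrix, and multiplication requires a Schur-complement-style coupling between the two children. The sandwich form $U \mN^{-1} V$ (a matrix-valued analogue of a Cohn--Amitsur linearization of a rational expression) is exactly the object for which addition is a direct sum, multiplication is a coupling in the off-diagonal block, and inversion is a Schur complement, reducing each case to a routine block-matrix calculation.
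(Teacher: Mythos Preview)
Your proof is correct. It differs from the paper's in the choice of inductive invariant: the paper maintains the submatrix form $(\mN'^{-1})_{I',J'}=f'$ directly throughout the induction. Contrary to your remark, that form \emph{does} compose, but the price is a heavier addition gadget---the paper's addition step builds an $(n_L+n_R+n_w+m_w)\times(n_L+n_R+n_w+m_w)$ matrix with an explicit $4\times4$ block structure whose inverse is written out and checked by hand so that the sum appears as one of its blocks. Your sandwich invariant $U\,\mN^{-1}V=g$ (the linear-representation form from Cohn--Amitsur linearization, as you note) makes addition a plain block-diagonal and defers the bookkeeping to a single padding at the root. Both constructions rest on the same block-inverse / Schur-complement identities (the paper's \Cref{fact}) and meet the same size bound $n$; yours trades cleaner per-gate gadgets for the extra root conversion, while the paper pays per addition node but needs no final step.

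One nit: your leaf gadget places $-\mM$ rather than $\mM$ as a block, which strictly falls outside the paper's definition of a symbolic block matrix (each block must be a constant or a bare symbol). Replacing it with the self-inverse $\bigl(\begin{smallmatrix} \mI & \mM \\ 0 & -\mI \end{smallmatrix}\bigr)$, which is exactly the paper's base case, fixes this with no change to the rest of your argument.
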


We want to give an example of \Cref{thm:main}.
Consider the formula $f(\mM, v) = \mM^{-1} v$
for some $n\times n$ matrix $\mM$ and $n$-dimensional vector $v$.
We define the matrix $\mN$ as follows and state its inverse:
\[
\mN = \begin{bmatrix}
\mM & v \\
\zeromat^{(1,n)} & -1
\end{bmatrix}
~~
\mN^{-1} =
\begin{bmatrix}
\mM^{-1} & \mM^{-1}v \\
\zeromat^{(1,n)} & -1
\end{bmatrix}.
\]
Here $\mN$ is a symbolic block matrix where $\mM$ and $v$ are the symbolic blocks.
Further there exists a block of the inverse $\mN^{-1}$ 
(in this case the top-right one) 
that is exactly $f(\mM, v)$.
\Cref{thm:main} claims that such a block matrix exists for every matrix formula $f$.
The proof will use the following observation about the inverse of block matrices:

\begin{lemma}\label{fact}
Given $n \times n$ matrix $\mA$,
$n \times m$ matrix $\mB$,
$m \times n$ matrix $\mC$,
and $m \times m$ matrix $\mD$,
define
\begin{align}
\mG(\mA,\mB,\mC,\mD) := \begin{bmatrix}
\mA & \mB \\
\mC & \mD
\end{bmatrix}.
\end{align}
If $\mA$ and $(\mD-\mC\mA^{-1}\mB)$ are invertible then
\begin{align}
\mG(\mA,\mB,\mC,\mD)
^{-1}
=
\begin{bmatrix}
\mA^{-1} + \mA^{-1}\mB(\mD-\mC\mA^{-1}\mB)^{-1}\mC\mA^{-1} & -\mA^{-1}\mB(\mD-\mC\mA^{-1}\mB)^{-1} \\
-(\mD-\mC\mA^{-1}\mB)^{-1}\mC\mA^{-1} & (\mD-\mC\mA^{-1}\mB)^{-1}
\end{bmatrix}.
\end{align}
\end{lemma}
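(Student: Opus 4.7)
The plan is to prove the formula via the block LDU factorization of $\mG$, since this both gives the inverse cleanly and explains where the Schur complement comes from. Writing $\mS := \mD - \mC\mA^{-1}\mB$ for the Schur complement, the first step is to verify the identity
\[
\mG(\mA,\mB,\mC,\mD)
=
\begin{bmatrix} \mI & \zeromat \\ \mC\mA^{-1} & \mI \end{bmatrix}
\begin{bmatrix} \mA & \zeromat \\ \zeromat & \mS \end{bmatrix}
\begin{bmatrix} \mI & \mA^{-1}\mB \\ \zeromat & \mI \end{bmatrix}
\]
by direct block multiplication. The $(1,1)$, $(1,2)$, and $(2,1)$ blocks read off immediately as $\mA$, $\mB$, and $\mC$, while the $(2,2)$ block collapses to $\mC\mA^{-1}\mB + \mS = \mD$. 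Notice that this factorization only uses invertibility of $\mA$.

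The second step is to invert each of the three factors separately. The two unipotent triangular factors are inverted by flipping the sign of their off-diagonal block, and the block-diagonal middle factor inverts entrywise as $\mathrm{diag}(\mA^{-1}, \mS^{-1})$; the hypothesis that $\mS$ is invertible is used exactly here. Reversing the order of the three inverses yields
\[
\mG^{-1}
=
\begin{bmatrix} \mI & -\mA^{-1}\mB \\ \zeromat & \mI \end{bmatrix}
\begin{bmatrix} \mA^{-1} & \zeromat \\ \zeromat & \mS^{-1} \end{bmatrix}
\begin{bmatrix} \mI & \zeromat \\ -\mC\mA^{-1} & \mI \end{bmatrix}.
\]
Expanding this product block-wise, the $(2,2)$ block is $\mS^{-1}$, the $(2,1)$ block is $-\mS^{-1}\mC\mA^{-1}$, the $(1,2)$ block is $-\mA^{-1}\mB\mS^{-1}$, and the $(1,1)$ block is $\mA^{-1} + \mA^{-1}\mB\mS^{-1}\mC\mA^{-1}$, which matches the four blocks in the statement once $\mS$ is unfolded.

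There is no real obstacle here; the proof is essentially a bookkeeping exercise. The only care needed is to respect the order of the factors (matrix multiplication being noncommutative) and to track where each invertibility hypothesis is used: invertibility of $\mA$ is needed to write the factorization at all, while invertibility of $\mS$ is needed only to invert the middle factor. As a sanity-check alternative one could forgo the decomposition entirely and simply multiply $\mG$ by the claimed expression for $\mG^{-1}$, expanding the four block products to verify that the result is $\mI$; this yields the same algebra, but the LDU route is more transparent and explains the appearance of the Schur complement.
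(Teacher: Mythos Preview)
Your proof is correct. The paper, however, does not actually give a proof at all: it explicitly omits the argument, noting only that the identity can be verified trivially by multiplying $\mG$ by the claimed inverse and checking that the product is the identity. This is precisely the ``sanity-check alternative'' you mention in your last paragraph.

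Your route via the block LDU factorization is a genuinely different (and more informative) argument. It explains \emph{why} the Schur complement $\mS = \mD - \mC\mA^{-1}\mB$ appears, and it cleanly separates the two invertibility hypotheses: invertibility of $\mA$ is needed to write the factorization, and invertibility of $\mS$ is needed to invert the block-diagonal middle factor. The paper's direct-verification approach is shorter to state but opaque; your approach is constructive and would generalize more naturally (e.g.\ to the case where $\mD$ rather than $\mA$ is assumed invertible). Both are entirely standard, and for the purposes of this lemma either is adequate.
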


We omit the proof of \Cref{fact} here as the statement can be verified trivially
by just multiplying the block matrix with the claimed inverse 
and checking that the result is the identity matrix.

The idea for proving \Cref{thm:main} is to construct a symbolic block matrix for each matrix operation.
Thus for each matrix operation we have some gadget,
and then we combine these gadgets recursively to construct a matrix $\mN$ for any formula $f$.
We give the following constructive proof on how to obtain $\mN$:

\begin{proof}[Proof of \Cref{thm:main}]
We prove \Cref{thm:main} via induction over the number of gates (i.e.~nodes) in formula $f$.
\paragraph{One gate}
If $f$ consists of a single gate $v$,
let $\mM_v$ be the $n_v \times m_v$ (possibly symbolic) matrix that $v$ is labeled by.
We define the $(n_v+m_v) \times (n_v+m_v)$ matrix
\begin{align*}
\mN = \left[\begin{matrix}
\mI^{(n_v)} & \mM_v \\
\zeromat^{(m_v,n_v)} & -\mI^{(m_v)}
\end{matrix}\right].
\end{align*}
Here we have $\mN^{-1} = \mN$, so for $I = \{1,...,n_v\}$, $J = \{n_v+1,...,n_v+m_v\}$
we have $(\mN^{-1})_{I,J} = \mM_v$.

\paragraph{Inductive Step}

Assume \Cref{thm:main} holds for formulas with at most $K$ gates.
Let $f$ be a formula with $K+1$ gates
and let $v$ be the root (i.e.~the output-gate).

We now have several cases, depending on which operation $v$ is labeled with.

\paragraph{Inversion}
Let $w$ be the child of $v$
and let $f'$ be the formula represented by the subtree rooted at $w$,
so $f(\mM_1,...,\mM_d) = f'(\mM_1,...,\mM_d)^{-1}$.

By induction hypothesis we have a symbolic block matrix $\mN'$
and sets $I',J'$ with 
\[(\mN'(\mM_1,...,\mM_d)^{-1})_{I',J'} = f'(\mM_1,...,\mM_d).\]
Let $n_{N'}$ be the dimension of $\mN'$, 
then we define $\mN = \mG(\mN', -\mI^{(n_{N'})}_{[n_{N'}],J'}, \mI^{(n_{N'})}_{I',[n_{N'}]}, \zeromat^{(n_w,n_w)})$
for $\mG$ as in \Cref{fact}.
Then by \Cref{fact} the bottom right block of $\mN^{-1}$ is
\[
(\zeromat^{(n_w,n_w)} + \mI^{(n_{N'})}_{I',[n_{N'}]} \mN'^{-1} \mI^{(n_{N'})}_{[n_{N'}],J'})^{-1}
= ((\mN'^{-1})_{I',J'})^{-1}.
\]
Thus for all $\mM_1,...,\mM_d$ 
we have that the bottom right block of $\mN(\mM_1,...,\mM_d)^{-1}$
is exactly $(f'(\mM_1,...,\mM_d))^{-1} = f(\mM_1,...,\mM_d)$,
assuming the inverse exists (i.e.~$f(\mM_1,...,\mM_d)$ is executable).

The number of rows and columns in $\mN$ is 
$(\sum_{x \in V \setminus \{w\}} n_x + m_x) + n_w \le \sum_{x \in V} n_x + m_x$,
where $(\sum_{x \in V \setminus \{w\}} n_x + m_x)$ is the size of $\mN$ by induction hypothesis.

\paragraph{Addition and Subtraction}

Let $w_L,w_R$ be the children of $v$
and let $f_L,f_R$ be the functions represented by the subtrees rooted at $w_L$ and $w_R$ respectively,
so $f(\mM_1,...,\mM_d) = f_L(\mM_1,...,\mM_k) + f_R(\mM_{k+1},...,\mM_d)$ for some $k$.

By induction hypothesis we have symbolic block matrices $\mL, \mR$ 
and respective sets $I_L,J_L$ and $I_R,J_R$ 
with $(\mL(\mM_1,...,\mM_k)^{-1})_{I_L,J_L} = f_L(\mM_1,...,\mM_k)$
and $(\mR(\mM_{k+1},...,\mM_d)^{-1})_{I_R,J_R} = f_R(\mM_{k+1},...,\mM_d)$.
Let $n_L, n_R$ be the dimension of $\mL$ and $\mR$ respectively,
i.e.~$\mL$ is a $n_L \times n_L$ matrix.

Note that $|I_L| = |I_R| = n_w$ and $|J_L|=|J_R|=m_w$
so we can define the following $(n_L+n_R+m_{w} + n_w) \times (n_L+n_R+m_{w} + n_w)$ matrix
\begin{align*}
\mN := \left[\begin{array}{cc|cc}
\mL & \zeromat^{(n_L,n_R)} & \mI^{(n_L)}_{[n_L],J_L} & \zeromat^{(n_L,n_w)}\\
\zeromat^{(n_R,n_L)} & \mR & \mI^{(n_R)}_{[n_R],J_R} & \zeromat^{(n_R,n_w)}\\
\hline
\zeromat^{(m_w,n_L)} & \zeromat^{(m_w,n_R)} & -\mI^{(m_w)} & \zeromat^{(m_w,n_w)}\\
\mI^{(n_L)}_{I_L,[n_L]} & \mI^{(n_R)}_{I_R,[n_R]} & \zeromat^{(n_w,m_w)} & -\mI^{(n_w)}
\end{array}\right].
\end{align*}
The inverse of that matrix is given by
\begin{align*}
\mN^{-1}=
\left[\begin{array}{cc|cc}
\mL^{-1} & \zeromat^{(n_L,n_R)} & (\mL^{-1})_{[n_L],J_L} & \zeromat^{(n_L,n_w)}\\
\zeromat^{(n_R,n_L)} & \mR^{-1} & (\mR^{-1})_{[n_R],J_R} & \zeromat^{(n_R,n_w)}\\
\hline
\zeromat^{(m_w,n_L)} & \zeromat^{(m_w,n_R)} & -\mI^{(m_w)} & \zeromat^{(m_w,n_w)}\\
(\mL^{-1})_{I_L,[n_L]} & (\mR^{-1})_{I_R,[n_R]} & (\mL^{-1})_{I_L,J_L}+(\mR^{-1})_{I_R,J_R} & -\mI^{(n_w)}
\end{array}\right].
\end{align*}
This can easily be verified by just multiplying the two matrices, as the result is a large identity matrix.
Note that the product of these two matrices could be computed by interpreting them as $4\times 4$ matrices 
and multiplying the blocks in the same way as one would multiply entries of a matrix. 
This way one can quickly verify that this is indeed the inverse of $\mN$.

Now consider the block of the inverse which is at the bottom, second from the right. That block satisfies
\begin{align*}
&~
(\mL(\mM_1,...,\mM_k)^{-1})_{I_L,J_L}+(\mR(\mM_{k+1},...,\mM_d)^{-1})_{I_R,J_R} \\
=&~ 
f_L(\mM_1,...,\mM_k) + f_R(\mM_{k+1},...,\mM_d) \\
=&~
f(\mM_1,...,\mM_d).
\end{align*}
Note that subtraction can be handled in the same way, 
by simply using $-\mI^{(n_R)}_{[n_R],J_R}$ for the block second from the top, third from the left, in matrix $\mN$.

\paragraph{Multiplication}

Let $w_L,w_R$ be the children of $v$
and let $f_L,f_R$ be the formulas represented by the subtree rooted as $w_L$ and $w_R$ respectively,
so $f(\mM_1,...,\mM_d) = f_L(\mM_1,...,\mM_k) \cdot f_R(\mM_{k+1},...,\mM_d)$ for some $k$.

By induction hypothesis we have symbolic block matrices $\mL, \mR$ 
and respective sets $I_L,J_L$ and $I_R,J_R$ 
with $(\mL(\mM_1,...,\mM_k)^{-1})_{I_L,J_L} = f_L(\mM_1,...,\mM_k)$
and $(\mR(\mM_{k+1},...,\mM_d)^{-1})_{I_R,J_R} = f_R(\mM_{k+1},...,\mM_d)$.
Let $n_L, n_R$ be the dimension of $\mL$ and $\mR$ respectively,
i.e.~$\mL$ is a $n_L \times n_L$ matrix.

We define the $(n_L + n_R) \times (n_L + n_R)$ matrix
$\mN = \mG(\mL, -\mI^{(n_L)}_{[n_L],J_L}\mI^{(n_R)}_{I_R,[n_R]}, \zeromat^{(n_R,n_L)}, \mR)$
for $\mG$ as in \Cref{fact} so the top right block of $\mN^{-1}$ is
\begin{align*}
&~
\mL^{-1} \mI^{(n_L)}_{[n_L],J_L}\mI^{(n_R)}_{I_R,[n_R]} (\mR - \zeromat^{(n_R,n_L)} \mL^{-1} \mI^{(n_L)}_{[n_L],J_L}\mI^{(n_R)}_{I_R,[n_R]} )^{-1} \\
=&~
\mL^{-1} \mI^{(n_L)}_{[n_L],J_L}\mI^{(n_R)}_{I_R,[n_R]} \mR^{-1} \\
=&~
(\mL^{-1})_{[n_L],J_L} (\mR^{-1})_{I_R,[n_R]}.
\end{align*}
So picking rows $I_L$ and columns $J_R$ of that block, 
we obtain 
\[(\mL(\mM_1,...,\mM_k)^{-1})_{I_L,J_L} (\mR(\mM_{k+1},...,\mM_d)^{-1})_{I_R,J_R} = f(\mM_1,...,\mM_d).\]
\paragraph{Construction time}
Constructing the matrix $\mN$ can be implemented to take $O(n^2)$ time, 
if one does not copy the matrices that represent the children in each inductive step 
and one instead uses pointers to reference them,
i.e~during construction one uses a recursive tree structure to represent the matrix.
Only at the very end, one constructs the matrix $\mN$ explicitly by following the pointers.
\end{proof}

\section{Applications}
\label{sec:applications}

In this section we show how to use the construction of \Cref{thm:main}
together with existing dynamic matrix inverse algorithms
to simplify and speed up some iterative algorithms.
In the introduction we already gave some examples of how our result can be used,
here we prove them more formally.

We start by formally reducing dynamic matrix formula to dynamic matrix inverse
in \Cref{sec:formula} and we list the data structures we obtain from this reduction.
The applications for
pivoting-based algorithms (e.g.~the simplex algorithm) are stated in \Cref{sec:pivoting}
and in \Cref{sec:ipm} we discuss how to use dynamic matrix inverse algorithms for interior point based linear program solvers.
To further demonstrate how these data structures can be used 
to obtain very fast iterative algorithms in a blackbox way, 
we demonstrate in \Cref{sec:onlinematrix} 
how to reproduce the online matrix inversion result 
and in \Cref{sec:qr} we show how to speed up the Gram-Schmidt procedure to run in $O(n^\omega)$ time.
At last, we discuss some insights regarding fine-grained complexity theory 
obtained from \Cref{thm:main} in \Cref{sec:finegrained}.

\subsection{Dynamic Matrix Formula}
\label{sec:formula}

The formal variant of \Cref{thm:informal_reduction} can be stated as follows:

\begin{theorem}[Formal variant of \Cref{thm:informal_reduction}]\label{thm:reduction}
The dynamic matrix formula problem is a data structure problem where
one is given a $p$-input matrix formula $f$ over field $\F$
and corresponding input matrices $\mM_1,...,\mM_p$.
After some initial preprocessing, 
the data structure allows for updates (changes to $\mM_1,...,\mM_p$)
and queries (retrieving entries of $f(\mM_1,...,\mM_p)$), 
assuming $f(\mM_1,...,\mM_p)$ is executable throughout all updates.

Define $n := \sum_{v\in V} n_v + m_v$ where $V$ are the gates of formula $f$
and assume there exists a dynamic matrix inverse data structure $\cA$
that, after some initial preprocessing, 
maintains the inverse of a non-singular matrix $\mA \in \F^{n\times n}$
under updates to $\mA$ and allows for querying entries of $\mA^{-1}$.

Then there exists a dynamic matrix formula data structure $\cF$ for maintaining $f$.
If $\cA$ supports changing any entry, column, or row of $\mA$,
then $\cF$ supports the same types of changes to the input matrices $\mM_1,...,\mM_p$.
If $\cA$ supports querying entries, rows, or columns of $\mA^{-1}$,
then $\cF$ supports the same types of queries to $f(\mM_1,...,\mM_p)$.
The preprocessing, update, and query complexities of $\cF$ are the same as the ones of $\cA$.
\end{theorem}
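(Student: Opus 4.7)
The plan is to use \Cref{thm:main} as a black-box reduction: given the input formula $f$, we first construct in $O(n^2)$ time the symbolic block matrix $\mN$ and the index sets $I, J \subset [n]$ guaranteed by \Cref{thm:main}. We then instantiate the supplied dynamic matrix inverse data structure $\cA$ on the concrete $n \times n$ matrix $\mN(\mM_1, \ldots, \mM_p)$; this costs at most the preprocessing time of $\cA$ (the $O(n^2)$ construction of $\mN$ is absorbed since $\cA$ already reads an $n \times n$ matrix). After this setup, $\cF$ services updates and queries by translating them into updates/queries on $\mN$ and $\mN^{-1}$ respectively, using the identity $f(\mM_1,\ldots,\mM_p) = (\mN(\mM_1,\ldots,\mM_p)^{-1})_{I,J}$.

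For updates, the crucial structural observation is that in the inductive construction behind \Cref{thm:main} each symbolic matrix $\mM_i$ occupies a single contiguous block of $\mN$, and every entry of $\mN$ outside that block is either a fixed constant from the construction (identity, zero, or a sign) or belongs to a different symbolic block whose current value $\cF$ tracks. Consequently, an entry update to $\mM_i$ corresponds to an entry update of $\mN$ at a computable position; a row (resp.\ column) update to $\mM_i$ changes only the $\mM_i$-portion of one row (resp.\ column) of $\mN$, and $\cF$ can assemble the full replacement row/column in $O(n)$ time and forward it as a single row/column update to $\cA$. In all three cases the update type is preserved and the overhead is dominated by the cost incurred inside $\cA$. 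For queries, an entry query for $f$ becomes an entry query on $\mN^{-1}$ at a position in $I \times J$; a row query becomes the corresponding row query on $\mN^{-1}$ followed by projection onto the columns indexed by $J$; columns are symmetric. Each translation adds only $O(n)$ bookkeeping.

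The main conceptual obstacle is already resolved by \Cref{thm:main}; what remains is the structural verification that the block containing each $\mM_i$ is preserved as one contiguous block throughout the inductive construction, so that row/column updates to $\mM_i$ do not correspond to updates of multiple rows/columns of $\mN$. Inspecting the three inductive steps (inversion, addition/subtraction, multiplication) confirms this: each step embeds the symbolic block matrices of the subformulas as block-diagonal or block-off-diagonal pieces of a larger matrix, never splitting an existing block. With this verification, the claimed update, query, and preprocessing complexities of $\cF$ match those of $\cA$ exactly.
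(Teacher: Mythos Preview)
Your proposal is correct and follows essentially the same approach as the paper: construct $\mN$, $I$, $J$ via \Cref{thm:main}, initialize $\cA$ on $\mN(\mM_1,\ldots,\mM_p)$, and translate updates/queries on the $\mM_i$ into updates/queries on $\mN$ and $\mN^{-1}$ respectively. You are in fact more explicit than the paper, which simply asserts that each $\mM_i$ is a block of $\mN$; your verification that the inductive construction never splits a symbolic block, and your remark that the $O(n^2)$ construction and $O(n)$ per-update bookkeeping are absorbed by $\cA$'s costs, fill in details the paper leaves implicit.
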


Note that, while the formula $f$ is part of the input,
only the matrices $\mM_1,...,\mM_p$ are allowed to change.

\begin{proof}
Given formula $f$ we apply \Cref{thm:main} to construct some $n\times n$ symbolic block matrix $\mN$ and sets $I,J\subset[n]$ such that $f(\mM_1,...,\mM_p) = (\mN(\mM_1,...,\mM_p)^{-1})_{I,J}$.\footnote{%
$\mN$ might be of smaller dimension than $n$, but that would only improve the complexities.}
For the given input matrices $\mM_1,...,\mM_p$ we then initialize the dynamic matrix inverse data structure $\cA$ on
$\mN(\mM_1,...,\mM_p)$.
Since each matrix $\mM_i$ is just a block of $\mN(\mM_1,...,\mM_p)$, 
changing an entry, column, or row of some $\mM_i$
corresponds to changing some entry, column or row of $\mN(\mM_1,...,\mM_p)$ respectively.
Further, querying an entry, row or column of $f(\mM_1,...,\mM_p) = (\mN(\mM_1,...,\mM_p)^{-1})_{I,J}$
can be done by querying an entry, row or column of $\mA(\mM_1,...,\mM_p)^{-1}$.
\end{proof}

Using \Cref{thm:reduction} and existing data structures for dynamic matrix inverse
we obtain the following data structures for dynamic matrix formula.

\begin{corollary}\label{thm:formula}
There exist the following dynamic matrix formula data structures with different types of update and query.
Given a $p$-input formula $f$ and initial input matrices $\mM_1,...,\mM_p$,
define $n := \sum_{v\in V} n_v + m_v$ where $V$ are the gates of formula $f$.
\begin{itemize}[nosep]
\item Changing any entry of any $\mM_1,...,\mM_p$ 
and querying any entry of $f(\mM_1,...,\mM_p)$ in $O(n^{1.407})$ field operations.
\item Changing any entry of any $\mM_1,...,\mM_p$ in $O(n^{1.529})$ field operations 
and querying any entry of $f(\mM_1,...,\mM_p)$ in $O(n^{0.529})$ field operations.
\item Changing any column of $\mM$ 
and querying any row of $f(\mM_1,...,\mM_p)$ in $O(n^{1.529})$ field operations.
\item Changing any column of $\mM$ 
and querying any row of $f(\mM_1,...,\mM_p)$ in $O(n^{\omega-1})$ field operations,
if the position of all updates and queries (i.e.~column and row indices) is a fixed sequence 
given to the data structure during the very first update.
\end{itemize}
All these data structures require $O(n^\omega)$ field operations for their preprocessing.
For the data structures with row updates and column queries 
there also exist variants with column updates and row queries.
\end{corollary}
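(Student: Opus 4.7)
The plan is to invoke \Cref{thm:reduction} as a blackbox reduction four times, once for each bullet, plugging in an existing dynamic matrix inverse algorithm whose update/query interface matches. Given $f$ and its initial inputs, \Cref{thm:main} produces a symbolic block matrix $\mN$ of dimension at most $n\times n$ together with index sets $I,J$ such that $(\mN(\mM_1,\dots,\mM_p)^{-1})_{I,J} = f(\mM_1,\dots,\mM_p)$ whenever the formula is executable. \Cref{thm:reduction} then asserts that any update/query mode supported by a dynamic matrix inverse data structure on $\mN(\mM_1,\dots,\mM_p)$ is inherited by a dynamic matrix formula data structure for $f$, with identical asymptotic complexities. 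So the task collapses to citing the appropriate dynamic matrix inverse algorithm for each of the four listed interfaces.

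For the preprocessing step I would compute $\mN(\mM_1,\dots,\mM_p)^{-1}$ once in $O(n^\omega)$ field operations via standard block-LU / fast matrix inversion; this dominates the $O(n^2)$ cost of building $\mN$ from $f$ guaranteed by \Cref{thm:main}. For the per-operation bounds I would match each bullet to a known dynamic matrix inverse result. The first bullet, element updates with element queries in $O(n^{1.407})$, and the second bullet, the $O(n^{1.529})$ update vs.\ $O(n^{0.529})$ query trade-off, are the bounds of Sankowski's dynamic matrix inverse data structures \cite{Sankowski04}. The third bullet (column updates with row queries in $O(n^{1.529})$) likewise follows from a Sankowski-style dynamic matrix inverse algorithm tuned to column/row operations and using rectangular matrix multiplication. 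The fourth bullet (offline $O(n^{\omega-1})$ per operation when the sequence of update and query indices is fixed in advance) is the batched/offline variant of dynamic matrix inverse, obtained by processing updates and queries in blocks of $n$ via rectangular fast matrix multiplication once the affected columns and rows are known.

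The row$\leftrightarrow$column symmetry claimed in the final sentence of the corollary is immediate: if the dynamic matrix inverse algorithm $\cA$ is only stated for column updates and row queries, then running it on $\mN^\top$ (and transposing $I,J$ accordingly) yields a variant for row updates and column queries with the same complexity, and this transpose passes through \Cref{thm:reduction} unchanged.

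The main obstacle is therefore not the reduction, which is essentially mechanical given \Cref{thm:reduction}, but rather identifying for each bullet the correct existing dynamic matrix inverse result and verifying that its update/query interface is of the required type; once matched, the complexities transfer verbatim because $\mN$ has dimension $O(n)$ and the $O(n^\omega)$ preprocessing absorbs both the initial inverse computation and the construction of $\mN$.
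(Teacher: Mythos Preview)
Your approach is exactly the paper's: invoke \Cref{thm:reduction} once per bullet, plug in an existing dynamic matrix inverse data structure with the matching interface, and handle the row/column swap by transposing (the paper phrases this as $(\mA^{-1})^\top=(\mA^\top)^{-1}$). The $O(n^\omega)$ preprocessing is likewise inherited directly from the cited data structures, so there is nothing to add structurally.

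The only correction concerns the attributions, which you yourself flag as the crux. The $O(n^{1.407})$ element-update/element-query bound is \emph{not} from \cite{Sankowski04}; it is due to \cite{BrandNS19}. Sankowski's paper supplies only the second bullet (the $O(n^{1.529})$/$O(n^{0.529})$ trade-off). Likewise, the column-update/row-query variants in the third and fourth bullets---both the online $O(n^{1.529})$ bound and the offline $O(n^{\omega-1})$ bound for a fixed update/query sequence---are also from \cite{BrandNS19}, not from Sankowski-style arguments. With those citations fixed, your proof matches the paper's verbatim.
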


For the complexity we remark that any formula with $k$ matrix operations
and matrices whose dimension is at most $m\times m$
we have $n = O(km)$ in \Cref{thm:formula}.
For most applications the number of operations is usually $O(1)$,
so the complexity is bounded by the largest dimension of any of the matrices.

\begin{proof}
There exist dynamic matrix inverse data structures with the following update and query operations:
\begin{itemize}[nosep]
\item[\cite{BrandNS19}] Changing any entry of $\mM$ and querying any entry of $\mM^{-1}$ in $O(n^{1.407})$ field operations. 
\item[\cite{Sankowski04}] Changing any entry of $\mM$ in $O(n^{1.529})$ time and querying any entry of $\mM^{-1}$ in $O(n^{0.529})$ field operations.
\item[\cite{BrandNS19}] Changing any column of $\mM$ and querying any row of $\mM^{-1}$ in $O(n^{1.529})$ field operations.
\item[\cite{BrandNS19}] Changing any column of $\mM$ and querying any row of $\mM^{-1}$ in $O(n^{\omega-1})$ field operations,
if the position of all updates and queries (i.e.~column and row indices) is a fixed sequence 
given to the data structure during the very first update.
\end{itemize}
All these data structures require $O(n^\omega)$ field operations for their preprocessing.
\Cref{thm:formula} is now obtained via the reduction of \Cref{thm:reduction}.

At last, note that for any non-singular matrix $\mA$ we have $(\mA^{-1})^\top = (\mA^\top)^{-1}$,
so a dynamic matrix inverse algorithm with column updates and row queries
can also be used as an algorithm with row updates and column queries by receiving the transpose as input.
Thus we also have data structures for row updates to $\mM_i$ and column queries to $f(\mM_1,...,\mM_p)$.
\end{proof}

\subsection{Pivoting Algorithms}
\label{sec:pivoting}

Here we show how to use our reduction for pivoting based algorithms.
As demonstration we will speed up the simplex algorithm
and simplify the algorithm for constructing basic solutions
by Beling and Megiddo \cite{BelingM98}.

\paragraph{Simplex Algorithm}

Given a linear program of the form
$$
\max_{\mA x = b,~ x \ge 0} c^\top x
$$
for a $d \times n$ matrix $\mA$ of rank $d$,
the simplex algorithm maintains a base $B$ of $\mA$,
i.e.~a subset of columns of $\mA$.
In each iteration one column of $\mA$ is removed from $B$ while another column is added.
Let's say the base $B \subset [n]$ is represented as a set of $d$ indices
and let $\mA_B$ be the submatrix of $\mA$ consisting of the base vectors,
then the simplex algorithm (and other pivoting based algorithms) must repeatedly
solve a linear system in $\mA_B^{-1}$.
More accurately, the simplex algorithm maintains a tableau of the following form
\begin{align*}
\left[
\begin{array}{ccc|c}
\omA & \mI & \zerovec & \ob \\
\oc & \zerovec^\top & 1 & \of
\end{array}
\right]
\end{align*}
where $\omA = \mA_B^{-1} \mA_N$,
$\ob = \mA_B^{-1} b$,
$\oc = c^\top_N - c^\top_B \mA_B^{-1} \mA_N$,
$\of = c^\top_B \mA_B^{-1} b$.
Here $N = [n] \setminus B$ are the non-basis indices.

After computing this tableau,
the simplex algorithm picks a column-pivot $j \in N$ with $\oc_j > 0$
then picks a row-pivot $i \in B$ with $\omA_{i,j} > 0$ which minimizes $\lambda = \min_i \ob_i / \omA_{i,j}$.
Then $i$ is removed from $B$ and $j$ is added to it.

Thus to implement one such step of the simplex algorithm efficiently,
we need a data structure that can maintain
all entries of $\oc$ and $\ob$, 
and the data structure must be also able to answer queries for any column of $\omA$.
For that note that we can write $\omA$ as
\begin{align}
\omA = \mA_B^{-1} \mA_N = (\mA \mI_{[n],B})^{-1} \mA \mI_{[n],N}
\label{eq:simplex}
\end{align}
and when changing an element in $B$ and $N$,
we just need to perform two element updates to $\mI_{[n],B}$ and $\mI_{[n],N}$
in order to obtain the new $\omA$.
We use \Cref{thm:formula} (more accurately the $O(n^{1.529})$ update and $O(n^{0.529})$ query variant)
to maintain the right-most formula of \eqref{eq:simplex}
for the aforementioned element updates.
Whenever a column of $\omA$ is required, we simply query the $n$ entries one by one in $O(n^{1.529})$ total time.
A similar formula as \eqref{eq:simplex} can also be constructed for $\ob$ and $\oc$,
thus we require only $O(n^{1.529})$ time per iteration.
For comparison, naively solving the linear system in $\mA_B^{-1}$ in each iteration takes $O(n^\omega)$ time,
which in most variants of the simplex method is sped up to $O(n^2)$ time 
by updating $\mA_B^{-1}$ (e.g.~via Gaussian elimination) whenever $B$ changes \cite{Bartels68}.

\paragraph{Basic Solutions}

Note that the idea to solve linear systems in $\mA_B$ via the formula $(\mA \mI_{[n],B})^{-1}$ 
can be applied to other pivoting algorithms as well.
Beling and Megiddo \cite{BelingM98} discussed how to construct an optimal basic solution, given some optimal solution for a linear program.
The naive algorithm requires $O(n^3)$ time ($O(n^2)$ time per iteration over $O(n)$ iterations)
and repeatedly computes a column of $\omA = \mA_B^{-1}\mA_N$ 
while replacing an entry in $B$ in each iteration,
just like the simplex method.
Beling and Megiddo modify this algorithm so that it can exploit fast matrix multiplication,
which improves complexity to $O(n^{2.529})$ time \cite{BelingM98}.
However, we observe that their modification is not required,
instead one can simply maintain $\omA$ via \Cref{thm:formula} in $O(n^{1.529})$ time per iteration
as done in the previous example for the simplex algorithm.
This directly yields a total time of $O(n^{2.529})$, 
matching the result of Beling and Megiddo 
but without changing the iterative algorithm itself.
Instead we just plug in a data structure to speed up the simple iterative algorithm.

\subsection{Interior Point Method}
\label{sec:ipm}

Consider a linear program of the form
$$
\min_{\mA x = b, x \ge 0} c^\top x,
$$
where $\mA$ is a $d\times n$ matrix and $n\ge d$.
Recent work on linear program solvers reduced solving linear programs to maintaining the following expressions
\begin{align}
\text{\cite{cls19}}:&~\phantom{\mR} \mD\mA^\top(\mA\mD^2\mA^\top)^{-1}\mA\mD \label{eq:ipm:cls}\\
\text{\cite{lsz19}}:&~ \mR\mD\mA^\top(\mA\mD^2\mA^\top)^{-1}\mA\mD \label{eq:ipm:lsz}\\
\text{\cite{b20}}:&~\phantom{\mR} \mD\mA^\top(\mA\mD^2\mA^\top)^{-1}\mA\mD h \label{eq:ipm:bra}\\
\text{\cite{JiangSWZ20}}:&~ \mR\mD\mA^\top(\mA\mD^2\mA^\top)^{-1}\mA\mD h \label{eq:ipm:jswz}
\end{align}
where $\mA$ is the constraint matrix of the linear program,
$\mR$ is a random sketch-matrix, 
$\mD$ is a diagonal matrix that changes over time,
and $h$ is a vector that changes over time as well.

Each of \eqref{eq:ipm:cls}-\eqref{eq:ipm:jswz} constructed their own data structure for maintaining these expressions.
Here we show that the first three can be reproduced via a single unifying data structure 
and the last once can be reduced to an existing dynamic matrix inverse algorithm.
We start with the expression \eqref{eq:ipm:bra} as used by v.d.Brand~\cite{b20}, 
as that one is easiest to explain how to maintain it via our reduction.

\paragraph{Reproducing \cite{b20}:}

The data structure task in \cite{b20} can be stated as follows
\begin{task}\label{task:lp}
Maintain \eqref{eq:ipm:bra} under updates to $\mD$ and $h$.
There are $\tilde{O}(\sqrt{n})$ iterations in total.
The number of entries of $\mD$ and $h$ that change in any iteration is a power of $2$,
and changing $2^j$ entries occurs at most once every $\tilde{O}(2^{j/2})$ iterations for $j=0,...,\log n$.
\end{task}
We will prove the following upper bound for \Cref{task:lp} which matches the complexity proven in \cite{b20}.
\begin{theorem}\label{thm:lp}
We can solve \Cref{task:lp} in $\tilde{O}(n^{1+x} + n^{\omega(1,1,x)-x/2} + n^{\omega-1/2})$
amortized time per iteration for any $0 \le x \le \alpha$.\footnote{%
Here $\alpha$ is the largest number such that $\omega(1,1,\alpha)=2$.}
\end{theorem}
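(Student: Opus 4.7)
The plan is to invoke the reduction of \Cref{thm:reduction} with the formula
$f(\mD,h) := \mD\mA^\top(\mA\mD^2\mA^\top)^{-1}\mA\mD h$
and to read its output entry-by-entry each iteration. Since $f$ has a constant number of gates whose input matrices ($\mD$, $\mA$, $h$) all have dimension $O(n)$, \Cref{thm:main} produces a symbolic block matrix $\mN$ of size $O(n)\times O(n)$. An entry change in $\mD$ or $h$ induces $O(1)$ entry changes in $\mN$, and the desired output vector is obtained by reading $n$ specific entries of $\mN(\mD,h)^{-1}$. This recasts \Cref{task:lp} as a dynamic matrix inverse problem with entry updates and entry queries.

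I would then plug in a Sankowski-style dynamic matrix inverse data structure with a tunable trade-off parameter $x\in[0,\alpha]$: fully rebuild $\mN^{-1}$ every time a batch of $\tilde O(n^x)$ entry changes accumulates, and between rebuilds store the changes as a Woodbury-type rank-$k$ correction so that an entry of $\mN^{-1}$ can be read in $\tilde O(n^x)$ time while a full rebuild costs $\tilde O(n^{\omega(1,1,x)})$ via rectangular matrix multiplication.

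The amortization rests on three accounting lines. First, the $O(n^\omega)$ preprocessing of $\mN^{-1}$ is spread over the $\tilde O(\sqrt n)$ iterations, contributing $\tilde O(n^{\omega-1/2})$ per iteration. Second, reading the full output vector costs $\tilde O(n^{1+x})$ per iteration ($n$ queries at $\tilde O(n^x)$ each). Third, by telescoping the power-of-two update pattern,
\[
\text{total entry changes over }T=\tilde O(\sqrt n)\text{ iterations}\;\le\;\sum_{j=0}^{\log n} 2^j\cdot \tilde O\!\left(\sqrt n/2^{j/2}\right)= \tilde O(n),
\]
so we can schedule a full rebuild every $\tilde O(n^{x/2})$ iterations (each such epoch accumulating $\tilde O(n^x)$ entry changes), yielding a per-iteration rebuild cost of $\tilde O(n^{\omega(1,1,x)}/n^{x/2})=\tilde O(n^{\omega(1,1,x)-x/2})$. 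Summing the three pieces gives the claimed bound.

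The main obstacle will be reconciling the irregular arrival pattern, with batch sizes $2^j$ appearing at prescribed frequencies, with a rebuild schedule that enforces the worst-case invariant ``at most $\tilde O(n^x)$ pending updates since the last rebuild''. A doubling-style schedule that triggers a rebuild whenever the accumulated update count crosses a geometric checkpoint, combined with the above telescoping sum as a potential function, will keep the Woodbury correction of rank $\tilde O(n^x)$ at all times so that each entry query genuinely costs $\tilde O(n^x)$, while the amortized rebuild cost remains $\tilde O(n^{\omega(1,1,x)-x/2})$.
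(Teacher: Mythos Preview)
Your high-level approach is the same as the paper's: apply \Cref{thm:main} to turn \eqref{eq:intro:b20} into an $O(n)\times O(n)$ dynamic inverse problem, keep a Woodbury correction for the pending entry changes, reset when it gets large, and read the $n$ output entries column-wise each iteration. The three cost terms you list are also the right ones. However, your amortization of the reset cost has a genuine gap.

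You claim that a doubling schedule ``will keep the Woodbury correction of rank $\tilde O(n^x)$ at all times'' and that each rebuild costs $\tilde O(n^{\omega(1,1,x)})$. This is impossible under \Cref{task:lp}: a \emph{single} iteration may deliver a batch of $2^j$ entry changes for any $j\le\log n$, so whenever $2^j>n^x$ the pending rank jumps above $n^x$ in one step, and the reset that absorbs it costs $n^{\omega(1,1,j/\log n)}$, not $n^{\omega(1,1,x)}$. Your fixed schedule ``every $\tilde O(n^{x/2})$ iterations, accumulating $\tilde O(n^x)$ changes'' only accounts for the small-scale updates; the large batches are simply uncharged. (Relatedly, the $n^{\omega-1/2}$ term is not just preprocessing; it is precisely the amortized cost of the largest resets.)

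The fix, which is what the paper does, is to reset whenever the pending set reaches size $\ge n^x$ and then charge a reset of size $\approx 2^j$ (for $j\ge x\log n$) against the $\tilde\Omega(2^{j/2})$ iterations that must have elapsed for that many changes to accumulate. Summing over all scales gives
\[
\tilde O\Big(\sum_{j=x\log n}^{\log n} n^{\omega(1,1,j/\log n)-j/(2\log n)}\Big),
\]
and one then uses convexity of $t\mapsto\omega(1,1,t)$ to bound this sum by its two endpoints $n^{\omega(1,1,x)-x/2}+n^{\omega-1/2}$. That convexity step is the missing ingredient in your argument.
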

By the guarantees of \Cref{task:lp} we know the total number of entries 
that must be changed in $\mD$ and $h$ is bounded by $\tilde{O}(n)$.
Thus using the $O(n^{1.529})$-time element update, $O(n^{0.529})$-time element query variant
of \Cref{thm:formula}, we would need $\tilde{O}(n^{2.529})$ time in total over all $\tilde{O}(\sqrt{n})$ iterations.
This can be improved further because the dynamic matrix inverse data structure we used here
is optimized as to minimize the update time \emph{per changed entry}.
However, we can improve the bound by analyzing the update time \emph{per iteration}.
For that we will use the following dynamic matrix inverse data structure
(proven in \Cref{sec:appendix}).
\begin{restatable}{lemma}{exactComplexity}
\label{lem:exact_complexity}
Given a non-singular $n\times n$ matrix $\mM$,
the data structure preprocesses the matrix in $O(n^\omega)$ time.
Afterwards the data structure supports the following operations
\begin{itemize}[nosep]
\item \textsc{Update}$(\Delta)$: 
The data structure stores the given $n \times n$ matrix $\Delta$.
Assume $\Delta$ is given in a sparse representation and changes at most $n^a$ entries in $n^b$ columns,
then the update time is 
$O(n^{a+b} + n^{b\cdot\omega}).$
\item \textsc{Query}$(I,j)$: 
For given $I \subset [n]$, $j \in [n]$, returns $((\mM+\Delta)^{-1})_{I,j}$
in $O(|I|\cdot n^a + n^{2b})$ time.
\item \textsc{Reset}: 
Sets $\mM \leftarrow \mM + \Delta$ 
and then $\Delta \leftarrow \zeromat^{(n,n)}$ 
in $O(n^{\omega(1,1,b)})$ time.
\end{itemize}
Note that the changes given to \textsc{Update} are temporary 
and will be overwritten by the next call to \textsc{Update},
unless \textsc{Reset} is called.
\end{restatable}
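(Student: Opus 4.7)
The plan is to apply the Sherman-Morrison-Woodbury (SMW) identity to a sparse low-rank factorization of $\Delta$, while keeping $\mM^{-1}$ stored explicitly. Preprocessing computes $\mM^{-1}$ densely in $O(n^\omega)$ time and caches it. For \textsc{Update}$(\Delta)$, let $C\subset[n]$ be the set of at most $n^b$ columns of $\Delta$ that contain a non-zero entry, and factor $\Delta = \mU\mV^\top$ with $\mU := \Delta_{[n],C} \in \F^{n\times n^b}$ (having at most $n^a$ non-zero entries) and $\mV^\top := \mI^{(n)}_{C,[n]}$. The SMW identity then gives
\[
(\mM+\Delta)^{-1} \;=\; \mM^{-1} - \mM^{-1}\mU\,\mZ\,\mV^\top\mM^{-1}, \qquad \mZ := \bigl(\mI^{(n^b)} + \mV^\top\mM^{-1}\mU\bigr)^{-1},
\]
which is well-defined whenever $\mM+\Delta$ is invertible, via the matrix determinant lemma.

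Alongside $\mM^{-1}$, the update would also cache $\mY := \mV^\top\mM^{-1}$ and $\mZ$. The matrix $\mY$ is just the $n^b$ rows of the stored $\mM^{-1}$ indexed by $C$, copied in $O(n^{1+b})$ time. To form $\mZ$, I would first compute $\mY\mU$ by iterating over the $n^a$ non-zero entries $\mU_{i,j}$: each contributes $\mU_{i,j}\cdot \mY_{*,i}$ (a length-$n^b$ vector) to column $j$ of the product, for total cost $O(n^{a+b})$; then invert the $n^b \times n^b$ matrix $\mI + \mY\mU$ in $O(n^{b\omega})$ time. This matches the stated update bound.

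For \textsc{Query}$(I,j)$ I would evaluate
\[
\bigl((\mM+\Delta)^{-1}\bigr)_{I,j} \;=\; (\mM^{-1})_{I,j} \,-\, (\mM^{-1})_{I,*}\,\mU\,\mZ\,\mY_{*,j}
\]
right-to-left. The stored column $\mY_{*,j}$ is read in $O(n^b)$ time; multiplying by $\mZ$ produces a length-$n^b$ vector in $O(n^{2b})$ time; multiplying that vector by $\mU$ using only the $n^a$ non-zero entries of $\mU$ costs $O(n^a)$ and yields a length-$n$ vector with at most $n^a$ non-zero coordinates; finally, the $|I|$ inner products with the rows of $\mM^{-1}$ indexed by $I$ cost $O(|I|\cdot n^a)$ thanks to this sparsity. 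The sum is $O(|I|\cdot n^a + n^{2b})$.

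For \textsc{Reset} I would materialize the full new inverse $\mM^{-1} - (\mM^{-1}\mU)\,\mZ\,\mY$ and overwrite the cache. The three matrix multiplications have shapes $(n\times n)(n\times n^b)$, $(n\times n^b)(n^b\times n^b)$, and $(n\times n^b)(n^b\times n)$; each costs at most $O(n^{\omega(1,1,b)})$ (using $\omega(1,b,b),\omega(1,b,1)\le \omega(1,1,b)$), and the subtraction is $O(n^2)$. The main delicate point I expect is making sure the update and query analyses genuinely exploit the $n^a$ sparsity of $\mU$ and not merely its column-count $n^b$: without it, the bounds would weaken to $O(n^{1+2b})$ for $\mY\mU$ and to $O(|I|\cdot n^b)$ for the query. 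Correctness itself is immediate from SMW once invertibility of $\mI + \mV^\top\mM^{-1}\mU$ is observed.
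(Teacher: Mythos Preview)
Your approach is essentially the paper's: the same Sherman--Morrison--Woodbury factorization $\Delta=\mU\mV^\top$ with $\mV^\top=\mI_{C,[n]}$, the same quantities computed in \textsc{Update}, \textsc{Query}, and \textsc{Reset}, and the same use of the $n^a$-sparsity of $\mU$ to get the $O(n^{a+b})$ and $O(|I|\cdot n^a)$ terms.

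There is one slip in the \textsc{Update} accounting. You explicitly copy $\mY=\mV^\top\mM^{-1}=(\mM^{-1})_{C,[n]}$ at a cost of $O(n^{1+b})$ and then assert this matches the stated bound $O(n^{a+b}+n^{b\omega})$. It need not: take $a=b=1/2$, so $n^{1+b}=n^{1.5}$ while $n^{a+b}+n^{b\omega}=n+n^{\omega/2}\approx n^{1.19}$. The fix is simply not to copy $\mY$. Since $\mM^{-1}$ is already stored densely, it suffices to remember the index set $C$ and read the required entries $(\mM^{-1})_{C,i}$ on the fly when forming $\mY\mU$ (still $O(n^{a+b})$ total) and $(\mM^{-1})_{C,j}$ in \textsc{Query} (still $O(n^b)$). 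This is exactly what the paper does, and with that change your argument goes through verbatim.
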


\begin{proof}[Proof of \Cref{thm:lp}]
We want to maintain the value of the formula \eqref{eq:ipm:bra}
while $\mD$ and $h$ change over time.
Note that each matrix (and the vector $h$) has their dimension bounded by $n$,
so applying \Cref{thm:main} results in a matrix $\mN$ of size $O(n)\times O(n)$.
We now maintain $\mN$ under element updates and in each iteration via \Cref{lem:exact_complexity}.

Let $U$ be the set of all entry changes 
that we had to perform to $\mD$ and $h$ 
since the last call to \textsc{Reset},
i.e.~$U$ grows in each iteration by the new updates we must perform.
In each iteration, we call \textsc{Update} to perform all $U$ updates at once.
If $U$ changes at least $n^x$ entries for some parameter $0\le x \le 1$,
then we also perform a call to \textsc{Reset}
and the set $U$ becomes an empty set again.
At the end of each iteration, 
perform a call to \textsc{Query} to obtain all $n$ entries of \eqref{eq:ipm:bra}.

Now let us bound the complexity. 
By \Cref{task:lp} we know that between any for two iterations that changed $2^j$ entries,
there must have been at least $\tilde{\Omega}(2^{j/2})$ iterations.
Thus for set $U$ to grow to some size $2^j$,
there must have been at least $\tilde{\Omega}(2^{j/2}/2)$ iterations
and the amortized cost of \textsc{Reset} per iteration is
\begin{align}
\tilde{O}\left(
\sum_{j=x\log n}^{\log n} n^{\omega(1,1,j / \log n)- j/(2 \log n)}
\right)
\label{eq:amortized}
\end{align}
which by convexity of $\omega(1,1,\cdot)$ can be bounded by
$\tilde{O}(n^{\omega(1,1,x)-x/2} + n^{\omega-1/2})$.\\
The time for a call to \textsc{Update} is bounded by
\begin{align*}
\tilde{O}(n^{a+b} + n^{b\cdot\omega})
=
\tilde{O}(n^{1+x} + n^{x\cdot\omega}),
\end{align*}
where $a \le 1$ and $b\le x$, 
because for $b \ge x$ the time for that iteration is dominated by the call to \textsc{Reset}
whose impact we already bounded.

The cost per \textsc{Query} is
$\tilde{O}(n^{1+x} + n^{2x}) = \tilde{O}(n^{1+x})$
as we query $n$ entries in a single column.

Combining the cost per iteration of \textsc{Update} and \textsc{Query}
with the amortized cost of \textsc{Reset} results in an amortized cost per iteration of
\[
\tilde{O}(n^{1+x} + n^{x\omega} + n^{\omega(1,1,x)-x/2} + n^{\omega-1/2}).
\]
In \cite{b20} it was assumed that $x \le \alpha$
in which case the above complexity simplifies to
\[
\tilde{O}(n^{1+x} + n^{2-x/2} + n^{\omega-1/2})
\]
which is exactly the complexity stated in \cite{b20}.
\end{proof}

\paragraph{Reproducing \cite{cls19}:}

The data structure of Cohen et al.~\cite{cls19} can be reproduced in a similar way
as done before for \cite{b20}.
The task to be solved is a bit different:
\begin{task}\label{task:lp_R}
Maintain \eqref{eq:ipm:cls} \emph{implicitly} under updates to $\mD$.
Once per iteration we are given an $n$-dimensional vector $h$
with $\tilde{O}(\sqrt{n})$ non-zero entries
and must return \eqref{eq:ipm:bra}.
\end{task}

Note that we can interpret \Cref{task:lp_R} as a variant of \Cref{task:lp}
where $h$ changes in exactly $\tilde{O}(\sqrt{n})$ entries in every iteration.
Thus we can use the same data structure as in the proof of \Cref{thm:lp}.
Unlike \Cref{task:lp}, however, \Cref{task:lp_R} does not explicitly state how
many entries change in $\mD$. 
Thus we only reproduce the construction of the data structure from \cite{cls19}, 
but not their amortized complexity analysis.
Though since we reproduce their data structure, 
one can apply their amortized complexity analysis.

We use the same data structure as in \Cref{thm:lp},
but now we perform a call to \textsc{Reset} if at least $n^x$
\emph{columns} are changed, instead of entries.
Then every iteration with at most $n^x$ changed columns
takes time 
\begin{align*}
&~
\tilde{O}(\underbrace{(n^{0.5}+n^x)n^x + n^{x\omega}}_{\textsc{Update}} + \underbrace{n(n^{0.5} + n^x) + n^{2x}}_{\textsc{Query}}) \\
=&~
\tilde{O}(n^{1.5} + n^{1+x} + n^{x\cdot\omega})
\end{align*}
to perform the call to \textsc{Update} and \textsc{Query},
and if some $n^y > n^x$ columns of $\mD$ need to change,
then the iteration takes $O(n^{\omega(1,1,y)})$ time 
based on the complexity of \textsc{Reset}.
This matches exactly the complexities of Cohen et al.~\cite[Lemma 5.4, Lemma 5.5]{cls19}
so we managed to reproduce their data structure.
In \cite{cls19} it was shown that for the choice $x \le \alpha$
the expected amortized time per iteration can then be bounded by
\[
\tilde{O}(n^{1.5}+n^{1+x}+n^{\omega-1/2}+n^{2-x/2}).
\]

\paragraph{Reproducing \cite{lsz19}:}

Lee et al.~\cite{lsz19} constructed a data structure for maintaining \eqref{eq:ipm:lsz}
in an implicit way, i.e.~in each iteration a \emph{dense} $n$-dimensional vector $h$ is given
and \eqref{eq:ipm:jswz} must be returned.
Here $\mR$ is an $\tilde{O}(n)\times n$ matrix
and in each iteration one must only return some $\tilde{O}(\sqrt{n})$ entries of \eqref{eq:ipm:jswz}
instead of the entire vector.
Thus the task is somewhat similar to \Cref{task:lp_R} (i.e.~the previous case)
except for the additional matrix $\mR$ multiplied on the left, the density of vector $h$, and the size of the output.

Because of the density of $h$, the cost of \textsc{Update} becomes a bit slower compared to the previous case.
The new complexity of \textsc{Update} is
$
O(n^{1+x}+n^{x\cdot\omega}).
$
The cost of \textsc{Query} is
$
\tilde{O}(n^{1.5} + n^{2x})
$
because we now query $\tilde{O}(\sqrt{n})$ entries,
but the number of changed entries is bounded by $n$.
So overall the cost of \textsc{Update} and \textsc{Query} is again $O(n^{1+x}+n^{1.5}+n^{x\omega})$
as in the previous case of \cite{cls19} (\Cref{task:lp_R}).
The cost of \textsc{Reset} is also the same as before.
This matches the complexities of \cite{lsz19}[Lemma B.5-B.10]
so we managed to reproduce their data structure.
In \cite{lsz19} it was shown that for the choice $x \le \alpha$
the expected amortized time per iteration
is $\tilde{O}(
n^{1+x} + n^{\omega(1,1,x)-x/2} + n^{\omega-1/2} + n^{1.5}
).$

\paragraph{Reproducing \cite{JiangSWZ20}:}

If one assumes $\omega=2$, 
then all previous amortized complexities are $\tilde{O}(n^{1+2/3})$ per iteration,
or $\tilde{O}(n^{2+1/6})$ total time for solving the linear program 
because there are $\tilde{O}(\sqrt{n})$ iterations in total.
It is an open problem whether this can be reduced to $\tilde{O}(n^{1.5})$ amortized time per iteration,
or equivalently $\tilde{O}(n^2)$ total time (assuming $\omega=2$),
which would match the lower bounds.

A breakthrough was made by Jiang et al.~\cite{JiangSWZ20} 
who showed that the total time complexity can be reduced to $\tilde{O}(n^{2+1/18})$
by combining the approach of v.d.Brand \cite{b20} and Lee et al.~\cite{lsz19},
i.e.~showing it suffices to maintain \eqref{eq:ipm:jswz} while both $\mD$ and $h$ change over time
and one must only return $\tilde{O}(\sqrt{n})$ entries of \eqref{eq:ipm:jswz} per iteration.
The majority (over 90 pages) of \cite{JiangSWZ20} is dedicated to constructing and analyzing 
the required data structure that can efficiently maintain \eqref{eq:ipm:jswz}.

Here we show how to obtain such a data structure 
via a short reduction to dynamic matrix inverse algorithms.
We use the algorithm of \Cref{thm:formula} with $O(n^{1.407})$ update time per changed entry,
which actually has the following complexity:\footnote{%
The statement of \cite[Theorem 4.3, Lemma 4.9]{BrandNS19} 
does not explicitly mention the reset-functions
as in \Cref{lem:elementInverse}.
Instead, it states that changing exactly $n^c$ entries in every iteration
takes time
$O(n^{\omega(a,b,c)} + n^{\omega(1,a,b)-b+c} + n^{\omega(1,1,a)-a+c})$ 
per iteration.
This is because the algorithm internally calls \textsc{ResetMiddleLayer}
every $n^{b-c}$ iterations and \textsc{ResetTopLayer} every $n^{a-c}$ iterations.
For $c=0$, balancing the parameters yields the previously stated $O(n^{1.407})$-time bound.}

\begin{lemma}[{\cite[Theorem 4.3, Lemma 4.9]{BrandNS19}}]\label{lem:elementInverse}
Given an $n\times n$ matrix $\mM$,
the data structure preprocesses the matrix in $O(n^\omega)$ time.
Afterward the data structure supports the following operations
\begin{itemize}
\item \textsc{Update}: Change any set of entries of $\mM$.
\item \textsc{Query}: Return any set of entries of some column of $\mM^{-1}$.
\item \textsc{ResetTopLayer} and \textsc{ResetMiddleLayer}: These methods allows to optimize the complexity (see complexities below).
\end{itemize}
Assume at most $n^a$ entries have been changed in $\mM$ since the last call to \textsc{ResetTopLayer}
and at most $n^b$ (for $b \le a$) entries have been changed since the last call to \textsc{ResetMiddleLayer}.
Then the complexity of \textsc{Update} when changing $n^c$ many entries is $O(n^{\omega(a,b,c)})$.
The complexity of \textsc{Query} when asking for $n^c$ entries of the same column of $\mM^{-1}$
is $O(n^{\omega(a,b,c)})$.
The complexity of \textsc{ResetTopLayer} is $O(n^{\omega(1,1,a)})$
and the complexity of \textsc{ResetMiddleLayer} is $O(n^{\omega(1,a,b)})$.
\end{lemma}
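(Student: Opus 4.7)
The plan is to build a three-layer lazy data structure, where each layer corresponds to a different time scale of batched updates and uses the Sherman--Morrison--Woodbury identity to convert low-rank updates of $\mM$ into low-rank corrections of $\mM^{-1}$. Let $\mM_0$ denote the matrix at the last \textsc{ResetTopLayer}, $\mM_1$ the matrix at the last \textsc{ResetMiddleLayer}, and $\mM$ the current matrix. By assumption $\mM_1 - \mM_0$ has support on at most $n^a$ entries, so it factors as $\mU_1 \mV_1^\top$ with $\mU_1 \in \F^{n\times n^a}$, $\mV_1 \in \F^{n \times n^a}$; similarly $\mM - \mM_1 = \mU_2 \mV_2^\top$ with inner dimension $n^b$, and any \textsc{Update} adds a further rank-$n^c$ correction. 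Applying Woodbury twice expresses $\mM^{-1}$ as $\mM_0^{-1}$ plus two low-rank corrections.

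The top layer stores $\mM_0^{-1}$ explicitly (computed once in $O(n^\omega)$ during preprocessing, recomputed incrementally during \textsc{ResetTopLayer}), together with the precomputed products $\mM_0^{-1}\mU_1$, $\mV_1^\top \mM_0^{-1}$, and the $n^a \times n^a$ inner Woodbury inverse. A \textsc{ResetTopLayer} must absorb the rank-$n^a$ correction into $\mM_0^{-1}$; the bottleneck is the product of an $n\times n^a$ matrix with an $n^a \times n$ matrix, giving $O(n^{\omega(1,1,a)})$. The middle layer maintains the analogous precomputed slabs for the rank-$n^b$ correction, but expressed with respect to $\mM_1^{-1}$; obtaining these slabs requires multiplying the already-corrected top layer into an $n \times n^b$ matrix, which costs $O(n^{\omega(1,a,b)})$ and is charged to \textsc{ResetMiddleLayer}.

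For \textsc{Update} changing $n^c$ entries, I would simply append them to a bottom-layer buffer and refresh the innermost $n^c \times n^c$ Woodbury inverse. Because the relevant slab products across the top and middle layers are already cached, the dominant work is the three-way product with dimensions $n^a, n^b, n^c$, hence $O(n^{\omega(a,b,c)})$. A \textsc{Query} for $n^c$ entries in one column of $\mM^{-1}$ is completely symmetric: read off the requested $n^c$ entries of the column of $\mM_0^{-1}$, then correct with the cached slabs at each layer, where again only an $n^c$-sized row slice participates. A matching $O(n^{\omega(a,b,c)})$ bound then follows by the same three-way multiplication cost.

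The main obstacle is the careful bookkeeping that keeps every layer's representation in the \emph{correct basis} so that both updates and queries stay within the stated bound. Specifically, one must ensure (i) that the inner Woodbury capacitance matrices for each layer are maintained correctly as updates stream in and as the outer layers are reset, (ii) that the cached slabs $\mM_0^{-1}\mU_1$, $\mM_1^{-1}\mU_2$ are regenerated at precisely the reset moments and never need to be rebuilt at query time, and (iii) that ranks do not silently inflate beyond $n^a, n^b, n^c$ when a single entry is overwritten multiple times. Once this invariant is in place, every complexity in the statement reduces to a single rectangular matrix multiplication, and the rest is routine verification via the Woodbury identity.
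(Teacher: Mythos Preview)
The paper does not prove this lemma at all; it is imported wholesale from \cite{BrandNS19} (Theorem~4.3 and Lemma~4.9 there), so there is no in-paper proof to compare against. Your high-level plan---a two-level lazy Sherman--Morrison--Woodbury scheme with an explicit $\mM_0^{-1}$, a rank-$n^a$ top correction, and a rank-$n^b$ bottom buffer---is indeed the architecture used in \cite{BrandNS19}, and your accounting for \textsc{ResetTopLayer} and \textsc{ResetMiddleLayer} is correct.

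There is, however, a real gap in your \textsc{Update} (and symmetrically \textsc{Query}) analysis. You propose to cache at the top layer the three factors $\mM_0^{-1}\mU_1$, $\mV_1^\top\mM_0^{-1}$, and the $n^a\times n^a$ capacitance $\mS_1=(\mI+\mV_1^\top\mM_0^{-1}\mU_1)^{-1}$ \emph{separately}. With only those in hand, appending $n^c$ entries to the bottom buffer requires the new rows and columns of $\mV_2^\top\mM_1^{-1}\mU_2$, and expanding $\mM_1^{-1}$ via Woodbury forces a product of shape $(n^a\times n^a)\cdot(n^a\times n^c)$ through $\mS_1$. That costs $O(n^{\omega(a,a,c)})$, which strictly dominates $O(n^{\omega(a,b,c)})$ whenever $b<a$. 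Separately, if you literally maintain the full $n\times n^b$ slab $\mM_1^{-1}\mU_2$ as your second paragraph suggests, extending it by $n^c$ columns is already $\Omega(n^{1+c})$, again too large. So the sentence ``the dominant work is the three-way product with dimensions $n^a,n^b,n^c$'' is not justified by the data you chose to cache.

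The repair is to cache the \emph{pre-multiplied} slab $\mR_1:=\mM_0^{-1}\mU_1\mS_1\in\F^{n\times n^a}$ (and, for queries, the restriction $(\mM_0^{-1})_{I_1,J_2}$), and to update $\mR_1$ \emph{incrementally} at each \textsc{ResetMiddleLayer} from its previous value using the block-inverse expansion of $\mS_1$; every term in that update is of the form $(n\times n^a)\cdot(n^a\times n^b)$ or smaller, so it fits in $O(n^{\omega(1,a,b)})$. With $\mR_1$ available, the \textsc{Update}-time work collapses to $(\mR_1)_{I_2^{\mathrm{new}},\cdot}\cdot(\mM_0^{-1})_{I_1,J_2}$, i.e.\ an $(n^c\times n^a)\cdot(n^a\times n^b)$ product, which is the promised $O(n^{\omega(a,b,c)})$. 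This is precisely the ``correct basis'' issue you allude to in your final paragraph, but you did not actually name the right invariant, so the sketch as written does not close.
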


Note that unlike \Cref{lem:exact_complexity}, the updates performed in \Cref{lem:elementInverse}
via \textsc{Update} persist
and are not replaced by the next call to \textsc{Update}.

Via the reduction of \Cref{thm:formula},
the task of maintaining \eqref{eq:ipm:jswz} under updates to $\mD$ and $h$
becomes the task of maintaining the inverse of some matrix $\mM$.
Changing some $k$ entries in $\mD$ or $h$ corresponds to changing some $k$ entries in $\mM$.
Querying $\tilde{O}(\sqrt{n})$ entries of \eqref{eq:ipm:jswz}
corresponds to querying $\tilde{O}(\sqrt{n})$ entries of $\mM^{-1}$.
Thus we can solve this task via \Cref{lem:elementInverse}
and further the complexities stated in \Cref{lem:elementInverse} 
match exactly the bounds of the data structure constructed in \cite[Section C-E]{JiangSWZ20}\footnote{%
Specifically \textsc{Update} matches \cite[Lemma E.3]{JiangSWZ20}, 
\textsc{ResetMiddleLayer} matches \cite[Lemma E.18]{JiangSWZ20}, 
and \textsc{ResetTopLayer} matches \cite[Lemma E.12]{JiangSWZ20},
see \cite[Table 11]{JiangSWZ20}.
We would like to point out that here we do not reproduce the data structure of \cite[Appendix I]{JiangSWZ20},
which would require further modification to \Cref{lem:elementInverse}.
},
so we are able to reproduce their data structure.

The amortized analysis of Jiang et al.~results in a $\tilde{O}(n^{1+5/9})$ time bound per iteration (assuming $\omega=2$).
While we do not reproduce their amortized complexity analysis here 
(we only reproduced their data structure),
we want to give an intuitive explanation of why $\tilde{O}(n^{1+5/9})$ amortized time per iteration is achieved,
i.e.~why an $\tilde{O}(n^{2+1/18})$-time LP solver is obtained instead of an $\tilde{O}(n^{2+1/6})$-time one.

If one ignores the impact of the randomization, 
then changing $2^j$ entries happens at most once every $\tilde{\Omega}(2^{j/2})$ iterations,
similar to \Cref{task:lp}.\footnote{Because of the random matrix $\mR$ this deterministic bound does not hold. In \cite{JiangSWZ20}, Jiang et al.~instead bound the expected time complexity.}
So by using the same argument as in \eqref{eq:amortized}
we can bound the amortized cost of \textsc{Update}, \textsc{ResetMiddleLayer} and \textsc{ResetTopLayer}
by $\tilde{O}(n^{\omega(a,b,0)} + n^{\omega(a,b,b)-b/2})$,
$\tilde{O}(n^{\omega(1,a,b)-b/2} + n^{\omega(1,a,a)-a/2})$
and $\tilde{O}(n^{\omega(1,1,a)-a/2} + n^{\omega-1/2})$ respectively,
if we always perform \textsc{ResetMiddleLayer} whenever more than $n^a$ entries changed 
since the last call to \textsc{ResetMiddleLayer},
and we perform \textsc{ResetTopLayer} whenever more than $n^b$ entries changed 
since the last call to \textsc{ResetTopLayer}.

The cost of obtaining the $\tilde{O}(\sqrt{n})$ entries of the value of the formula
via \textsc{Query} is $\tilde{O}(n^{\omega(a,b,0.5)})$,
so the amortized cost per iteration is
\[
\tilde{O}(
n^{\omega(a,b,0.5)}
+ n^{\omega(1,a,b)-b/2}
+ n^{\omega(1,1,a)-a/2}
+ n^{\omega-1/2}
).
\]
If we assume $\omega=2$ and $a,b\ge 0.5$, then this becomes
\[
\tilde{O}(
n^{a+b}
+ n^{1+a-b/2}
+ n^{2-a/2}
+ n^{1.5}
)
\]
which for $a = 8/9, b = 6/9$
is $\tilde{O}(n^{1+5/9})$ amortized time,
or $\tilde{O}(n^{2+1/18})$ total time over $\tilde{O}(\sqrt{n})$ iterations.

\subsection{Online Linear System}
\label{sec:onlinematrix}

In the online linear system problem one considers a linear system of the form $\mA x = b$ 
for some non-singular $n\times n$ matrix $\mA$. 
Let $\mA_{[s],[s]}$ be the top $s \times s$ block of $\mA$ 
and $b_{[s]}$ be the top $s$ entries of $b$
and assume that $\mA_{[s],[s]}$ is non-singular for all $s=1,...,n$.
Then the task is to solve $\mA_{[s],[s]} x = b_{[s]}$ for all $s$.
The difficulty of the online linear system problem 
comes from the fact that matrix $\mA$ and vector $b$ are given in an online way,
that is, the algorithm must answer the solutions for current $s$,
before row number $s+1$ of $\mA$ and entry number $s+1$ of $b$ are given.
Storjohann and Yang \cite{StorjohannY15} have given an $O(n^\omega)$ time algorithm for this problem.
Here we show that the same result can be obtained by \Cref{thm:formula}.
For that consider formula 
\begin{align}
(\mD\mB\mD+(\mI-\mD))^{-1} \mD d.\label{eq:onlineLS}
\end{align}
Here $\mD$ is an $n\times n$ diagonal matrix,
$\mB$ is an $n\times n$ matrix,
and $d$ is an $n$ dimensional vector.

Assume that during iteration number $s$ of the online linear system problem,
we have that the first $s$ diagonal entries of $\mD$ are $1$ and the rest is $0$.
Further, the first $s$ rows of $\mB$ are exactly the first $s$ rows of $\mA$,
while all other rows of $\mB$ are $0$.
Additionally, the first $s$ entries of $d$ are the first $s$ entries of $b$,
while the remaining entries of $s$ are $0$.
Then we have
\[
(\mD\mB\mD+(\mI-\mD))^{-1} \mD d
=
\mB_{[s],[s]}^{-1} d_{[s]}
=
\mA_{[s],[s]}^{-1} b_{[s]}.
\]
Thus by maintaining \eqref{eq:onlineLS} under row updates to add the new rows/entries to $\mB$, $\mD$, and $d$ in each iteration,
we are able to solve the online linear system problem.
Note that in each iteration, we must perform $O(1)$ row updates 
to insert the extra $1$ in $\mD$, insert the new row into $\mB$,
and insert the new entry to $d$.
Further, we require only one column of the result, as the result is just a vector.
At last, observe that the position of these updates is a fixed sequence: in iteration number $s$
we only touch the $s$-th diagonal entry of $\mD$, the $s$-th row of $\mB$ and the $s$-th entry of $d$.
Thus we can use the last data structure in \Cref{thm:formula} with $O(n^{\omega-1})$ update time,
for a total time of $O(n^\omega)$, reproducing the result of Storjohann and Yang \cite{StorjohannY15}.
We further want to note that the algorithm of Storjohann and Yang 
requires $O(n^{\omega-1})$ \emph{amortized} time per iteration,
while the update time of our algorithm is \emph{worst-case} 
after an initial $O(n^\omega)$ initialization time before reading any input.\footnote{%
It is possible to remove the $O(n^\omega)$ initialization time for this application.
The preprocessing of the dynamic matrix inverse data structure 
we use here consists only of computing the inverse of the given input matrix. 
For this application, the inverse can be constructed in $O(n)$ time 
because we have $\mD = \zeromat^{(n,n)}$, $\mB = \zeromat^{(n,n)}$, $d=\zeromat^{(n,1)}$.}

\subsection{QR Decomposition}
\label{sec:qr}

The $QR$ decomposition of a $d\times n$ matrix $\mV$ with $d \le n$
consists of an orthogonal matrix $\mQ$ and an upper triangular matrix $\mR$
with $\mQ\mR=\mV$.
The most simple algorithm for obtaining the $QR$ decomposition
is the Gram-Schmidt-Orthonormalization 
(as the columns of $\mQ$ are the orthonormalzation of the columns of $\mV$).
Let $v_1,...,v_n$ be input vectors (i.e.~columns of $\mV$).
Then the algorithm repeatedly computes $v'_i = v_i - \mP_{i-1}$,
where $\mP_{i-1} v_i$ is the orthogonal projection of $v_i$ onto $span\{v_1,...,v_{i-1}\}$,
and then the vector $v'_i$ is normalized.
The vector $v'_i$ is usually computed via
\begin{align}
v'_i = v_i - \mP_{i-1}v_i = v_i - \sum_{j=1}^{i-1} v'_j v'^\top_j v_i \label{eq:gs}
\end{align}
which takes $O(n^2)$ time
and thus the entire algorithm runs in $O(n^3)$ total time.
We can speed this up to $O(n^\omega)$ time by writing $\mP_{i-1} v_i$ as a formula 
whose input matrices change in a sparse way
and then applying a data structure from \Cref{thm:formula} to maintain this formula.
The formula we use to maintain $\mP_{i-1} v_i$ is given by the following lemma.

\begin{lemma}\label{lem:gs_formula}
Let $B \subset [i-1]$ be a set of indices of a maximal independent set of the first $i-1$ columns of $\mV$,
i.e.~$span\{v_j \mid j\in B\} = span\{v_1,...,v_{i-1}\}$ and $rank(v_1,...,v_{i-1}) = |B|$.
Define $\mD_B$ to be the $n\times n$ diagonal matrix 
where the $j$-th diagonal entry is $1$ if $j\in B$ and $0$ otherwise. 
Then we have
$$
\mP_{i-1} = \mV \mD_B (\mD_B \mV^\top \mV \mD_B + (\mI - \mD_B))^{-1} \mD_B \mV^\top .
$$
\end{lemma}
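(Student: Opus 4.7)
The plan is to identify the given formula as a re-packaging of the standard projection formula $\mV_B(\mV_B^\top\mV_B)^{-1}\mV_B^\top$, where $\mV_B$ denotes the $d\times|B|$ submatrix of $\mV$ formed by the columns indexed by $B$. Since $B$ picks out a maximal independent subset of $v_1,\dots,v_{i-1}$, the columns of $\mV_B$ are a basis for $\mathrm{span}\{v_1,\dots,v_{i-1}\}$, and so the orthogonal projection onto that span is $\mV_B(\mV_B^\top\mV_B)^{-1}\mV_B^\top=\mP_{i-1}$. Thus the whole proof reduces to showing
\[
\mV\mD_B\bigl(\mD_B\mV^\top\mV\mD_B+(\mI-\mD_B)\bigr)^{-1}\mD_B\mV^\top = \mV_B(\mV_B^\top\mV_B)^{-1}\mV_B^\top.
\]

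My first step would be to reduce to a convenient block form. Permuting indices so that $B=\{1,\dots,|B|\}$ (which does not change the matrix products, only the presentation), we have $\mV=[\mV_B,\mV_{\bar B}]$ and $\mD_B=\begin{bmatrix}\mI&0\\0&0\end{bmatrix}$, $\mI-\mD_B=\begin{bmatrix}0&0\\0&\mI\end{bmatrix}$. Then the conjugation $\mD_B\mV^\top\mV\mD_B$ annihilates every row and column outside $B$ and leaves the $|B|\times|B|$ block $\mV_B^\top\mV_B$ on the $B$-block. Adding $\mI-\mD_B$ places an identity on the complementary block, so the inner matrix equals the block-diagonal matrix
\[
\mA=\begin{bmatrix}\mV_B^\top\mV_B&0\\0&\mI\end{bmatrix}.
\]

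Next I would invert $\mA$ trivially (block-diagonally) to get $\mA^{-1}=\begin{bmatrix}(\mV_B^\top\mV_B)^{-1}&0\\0&\mI\end{bmatrix}$, and observe $\mV\mD_B=[\mV_B,\,0]$ and $\mD_B\mV^\top=[\mV_B^\top;\,0]$. The outer multiplications by $\mD_B$ on either side of $\mA^{-1}$ effectively pick out its $B$-block, so multiplying through gives $\mV_B(\mV_B^\top\mV_B)^{-1}\mV_B^\top$, as required. A side check is needed only to confirm that $\mV_B^\top\mV_B$ is genuinely invertible, which follows because the columns of $\mV_B$ are linearly independent by the definition of $B$; this is also what makes $\mA$ invertible and thus the formula executable.

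There is essentially no obstacle: once one spots that $\mD_B$ acts as a selector which kills all indices outside $B$, the verification is a routine block-matrix computation, and the only thing one needs to be careful about is the justification that $B$ (which has size equal to $\operatorname{rank}(v_1,\dots,v_{i-1})$) does span the same subspace as $v_1,\dots,v_{i-1}$, so that the classical projection formula $\mV_B(\mV_B^\top\mV_B)^{-1}\mV_B^\top$ indeed equals $\mP_{i-1}$.
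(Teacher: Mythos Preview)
Your proposal is correct and follows essentially the same approach as the paper: both reduce to the classical projection formula $\mV_B(\mV_B^\top\mV_B)^{-1}\mV_B^\top$, assume without loss of generality that $B=\{1,\dots,|B|\}$, observe that the inner matrix is block-diagonal with blocks $\mV_B^\top\mV_B$ and $\mI$, and then read off the result. The only cosmetic difference is that the paper phrases the selection via $\mI_{[n],B}$ while you use the notation $\mV_B$ directly.
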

\Cref{lem:gs_formula} together with \Cref{thm:main} implies an $O(n^\omega)$ time algorithm for computing the orthonormalization of $v_1,...,v_n$.
Let $B$ be initially an empty set and maintain
\begin{align}
\left(\mI - \mV \mD_B (\mD_B \mV^\top \mV \mD_B + (\mI - \mD_B))^{-1} \mD_B \mV^\top \right) \mV. \label{eq:qr}
\end{align}
During iteration $i$ we query the $i$-th column of \eqref{eq:qr} and save the result as $v'_i$.
If $v'_i$ is a non-zero vector, then we add $i$ to set $B$,
otherwise we do not add it to $B$.
If $B$ was a maximal independent set of the first $i-1$ columns of $\mV$,
then after this iteration it is also a maximal independent set of the first $i$ columns.
To see this note that $v'_i$ is non-zero if and only if $v'_i$ is linearly independent of $\{v_j \mid j \in B\}$.

Next, we increase $i$ and repeat until we have $v'_1,...,v'_n$.
By \Cref{lem:gs_formula} we computed $v'_i = v_i - \mP_{i-1} v_i$ during iteration number $i$.
Further note that the position of the updates and queries is fixed, i.e.~during iteration $i$ we update the $i$-th diagonal entry of $\mD_B$ (we either set it to $1$ or $0$) and we query the $i$-th column of \eqref{eq:qr}.
Thus we can use the data structure variant of \Cref{thm:formula}
that has $O(n^{\omega-1})$ time per update and query for a total time of $O(n^\omega)$.

This example shows how one can speed up simple iterative algorithms 
by constructing a formula (e.g.~\eqref{eq:qr}) 
that represents the required computation in each iteration of the iterative algorithm
(e.g.~\eqref{eq:gs}).

\begin{proof}[Proof of \Cref{lem:gs_formula}]
Note that for any $n\times d$ matrix $\mA$ of rank $d$,
the matrix
$\mA (\mA^\top \mA)^{-1} \mA^\top$
is the orthogonal projection onto the image of $\mA$.
Thus for set $B \subset [i-1]$ as in \Cref{lem:gs_formula} we have that
$
\mV \mI_{[n],B} (\mI_{[n],B} \mV^\top \mV \mI_{[n],B})^{-1} \mI_{B,[n]} \mV^\top
$
is the orthogonal projection onto the image of $\mV \mI_{[n],B}=\mV_{[n],B}$, i.e.~$span\{v_1,...,v_{i-1}\}$.

For $\mM := \mI_{B} \mV^\top \mV \mI_{B} + (\mI - \mI_{B})$ and $N := [n] \setminus B$ we have 
\begin{align}
(\mM^{-1})_{B,B} = (\mI_{B,[n]} \mV^\top \mV \mI_{[n],B})^{-1},~~
(\mM^{-1})_{N,N} = \mI. \label{eq:blowup}
\end{align}
To see this assume without loss of generality that $B = [k]$ for $k = |B|$ 
(otherwise we could reorder the columns of $\mV$),
in which case we have
\begin{align*}
\mI_{B} \mV^\top \mV \mI_{B} + (\mI - \mI_{B})
=
\left[\begin{array}{cc}
\mI_{B,[n]} \mV^\top \mV \mI_{[n],B} & \\
& \mI_{[n]\setminus B,[n]\setminus B}
\end{array}\right].
\end{align*}
So to invert the matrix we just have to invert the block $\mM_{B,B}$ as predicted by \eqref{eq:blowup}.
In summary we show
\[
\mV \mD_B (\mD_B \mV^\top \mV \mD_B + (\mI - \mD_B))^{-1} \mD_B \mV^\top
=
\mV \mI_{[n],B} (\mI_{[i],B} \mV^\top \mV \mI_{[n],B})^{-1} \mI_{B,[n]} \mV^\top
\]
is the orthogonal projection onto $span\{v_1,...,v_{i-1}\}$.
\end{proof}

\subsection{Insights regarding Fine Grained Complexity}
\label{sec:finegrained}

At last, we want to discuss implications of \Cref{thm:main} 
from a fine grained complexity perspective.
In the static setting (i.e.~when the input matrices do not change),
computing product, inverse, or determinant of a matrix all require $O(n^\omega)$ time \cite{BurgisserCS97}.
In fact, these problems are equivalent and can be reduced to one-another.
This also means that computing some formula $f$ consisting of $O(1)$ matrix products, sums, differences, and inversions
can be computed in $O(n^\omega)$ time.
However, in the dynamic setting (when input matrices are allowed to change),
the time required for maintaining $f$ can vary a lot.
Consider for example the task of maintaining the matrix product $\mM:=\mA\mB$ of two $n\times n$ matrices,
while entries of $\mA$ or $\mB$ change.
Every time an entry of $\mA$ (or $\mB$) changes,
we only have to add a row of $\mB$ (or column of $\mA$) to the result $\mM$,
so the problem can be solved in $O(n)$ time.
For the product $\mM := \mA\mB\mC$,
the problem suddenly becomes a lot harder,
because now changing an entry in $\mB$
can change all $n^2$ entries of $\mM$.
This shows that maintaining some formulas $f$ is easier than others
which raises the question: \\
\emph{What are the hardest formulas? Is there a limit for how hard maintaining a formula can be?}

\Cref{thm:main} shows that just maintaining the inverse of some matrix is already the hardest case,
and all other formulas can be reduced to it.

\section*{Acknowledgment}

This project has received funding from the Google PhD Fellowship Program,
and the European Research Council (ERC) under the European Unions Horizon 2020 research and innovation program 
under grant agreement No 715672.

\bibliographystyle{alpha}
\bibliography{ref}

\newpage
\appendix

\section{Appendix}
\label{sec:appendix}

\exactComplexity*

\begin{proof}
We compute $\mM^{-1}$ during the preprocessing.
The Sherman-Morrison-Woodbury identity
\cite{ShermanM50,Woodbury50}
predicts for invertible $\mM$ and $\mM+\mU\mV^\top$ that
\ifdefined\SODAversion
$$
(\mM+\mU\mV^\top\hspace{-1pt})^{-1}\hspace{-2pt} = \mM^{-1}\hspace{-1pt} - \mM^{-1}\mU (\mI + \mV^\top\hspace{-1pt} \mM^{-1} \mU)^{-1} \mV^\top \hspace{-1pt}\mM^{-1}
$$
\else
\[
(\mM+\mU\mV^\top)^{-1} = \mM^{-1} - \mM^{-1}\mU (\mI + \mV^\top \mM^{-1} \mU)^{-1} \mV^\top \mM^{-1}
\]
\fi
When changing $n^a$ entries in $n^b$ columns, 
then $\mU,\mV$ can be written as $n \times n^b$ matrices.
Specifically when adding some matrix $\Delta$ to $\mM$,
then we can write $\Delta = \mU\mV^\top$ 
where $\mU$ consists of the nonzero columns of $\Delta$
and $\mV$ consists of the column positions, 
i.e.~if the $k$-th non-zero column of $\Delta$ is the $j$-th column,
then $\mV_{[n],k} = \unitvec_j$ (the $j$-th standard unit vector).

During \textsc{Update}, we only compute the inverse $(\mI + \mV^\top \mM^{-1} \mU)^{-1}$.
Here $\mV^\top \mM^{-1} \mU$ are $n^b$ columns of $\mM^{-1} \mU$
because of the sparsity of $\mV$,
so we can compute the matrix in $O(n^{a+b})$ operations.
Inverting the resulting matrix requires $O(n^{b\cdot\omega})$ operations.

During \textsc{Query}, we want to compute 
$$(\mM^{-1})_{I,j} - (\mM^{-1} \mU (\mI + \mV^\top \mM^{-1} \mU)^{-1} \mV^\top \mM^{-1} \unitvec_j)_{I}.$$
We start by computing the vector
$v := (\mI + \mV^\top \mM^{-1} \mU)^{-1} \mV^\top \mM^{-1} \unitvec_j$
by multiplying from right to left: 
$\mV^\top \mM^{-1} \unitvec_j$ are just $n^b$ entries of a column of $\mM^{-1}$
so computing this product can be done in $O(n^b)$ time.
Multiplying the vector with the inverse 
$(\mI + \mV^\top \mM^{-1} \mU)^{-1}$
requires $O(n^{2b})$ time.
Note that the inverse was already computed during \textsc{Update}.

Next, we compute the rows given by $I$ of $\mM^{-1} \mU$
in $O(|I|\cdot n^a)$ operations,
and at last we multiply these rows with the vector
$v$ to obtain $(\mM^{-1} \mU v)_I$.
This step takes $O(|I|n^{b})$ time which is subsumed by $O(|I|n^a)$.
Subtracting the result from $(\mM^{-1})_{I,j}$ is the desired answer for the call to \textsc{Query}.

For \textsc{Reset} we compute $(\mM+\mU\mV)^{-1}$ explicitly via the Sherman-Morrison-Woodbury identity in $O(n^{\omega(1,1,b})$ operations and then set $\mM \leftarrow \mM + \mU\mV^{-1}$.
\end{proof}

\end{document}